\else\newcommand{\extendedversion}\fi
\renewcommand{\vec}[1]{\mathaccent"017E{#1}}
\newcommand{\Fig}{Fig.}
\newcommand{\scythe}{\textsc{Scythe}\xspace}
\newcommand{\tool}{\textsc{PolySemist}\xspace}
\newcommand{\True}{\textit{True}\xspace}
\newcommand{\False}{\textit{False}\xspace}
\newcommand{\Acts}{\textit{Acts}\xspace}
\DeclarePairedDelimiter{\angles}{\langle}{\rangle}
\DeclarePairedDelimiter{\sem}{[\![}{]\!]}
\newcommand{\TLA}{TLA\textsuperscript{+}\xspace}
\newcommand{\Lang}[1]{\mathcal{L}(#1)}
\newcommand{\Vars}{\textsc{Vars}\xspace}
\newcommand{\Init}{\textsc{Init}\xspace}
\newcommand{\Next}{\textsc{Next}\xspace}
\newcommand{\Holes}{\textsc{Holes}\xspace}
\newcommand{\prop}{\ensuremath{\Phi}\xspace}
\begin{document}

\title{Accelerating Protocol Synthesis and Detecting Unrealizability with Interpretation Reduction}
\titlerunning{Accelerating Protocol Synthesis and Detecting Unrealizability}
%
\author{Derek Egolf
\and Stavros Tripakis
}
\authorrunning{D. Egolf \& S. Tripakis}
%
\institute{Northeastern University, Boston MA, USA\\
\email{\{egolf.d,stavros\}@northeastern.edu}}
\maketitle              
\begin{abstract}
We present a novel counterexample-guided, sketch-based method for the synthesis
of symbolic distributed protocols in \TLA. Our method's chief novelty lies in a
new search space reduction technique called interpretation reduction, which
allows to not only eliminate incorrect candidate protocols before they are sent
to the verifier, but also to avoid enumerating redundant candidates in the
first place. Further performance improvements are achieved by an advanced
technique for exact generalization of counterexamples. Experiments on a set of
established benchmarks show that our tool is almost always faster than the
state of the art, often by orders of magnitude, and was also able to synthesize
an entire \TLA protocol ``from scratch'' in less than 3 minutes where the state
of the art timed out after an hour. Our method is sound, complete, and
guaranteed to terminate on unrealizable synthesis instances under common
assumptions which hold in all our benchmarks.
\end{abstract}

\section{Introduction}
\label{sec:intro}

Distributed protocols are the foundation of many modern computer systems. They
find use in domains such as finance~\cite{Buterin2013,Buchman2016TendermintBF}
and cloud computing~\cite{corbett2013spanner,decandia2007dynamo}. Distributed
protocols are notoriously difficult to design and verify. These difficulties
have motivated the advances in fuller automation for verification of both
safety
\cite{2021ic3posymmetry,2021swisshance,yao2021distai,YaoTGN22,DBLP:conf/fmcad/SchultzDT22,schultz2024scalablearxiv}
and liveness~\cite{YaoLivenessPOPL2024} properties of distributed protocols. 

In this paper, we study the automated {\em synthesis} of distributed
protocols. In verification, the goal is to check whether a given protocol
satisfies a given property. The goal in synthesis is to {\em invent} a protocol
that satisfies a given property. Synthesis is a challenging problem
\cite{PnueliRosner89,DBLP:journals/corr/JacobsB14,PnueliRosner90,Thistle2005,TripakisIPL}, and we use
the  {\em sketching} paradigm~\cite{LezamaAPLAS2009,ArmandoSTTT2013}  to turn
the synthesis problem into a search problem. The user provides a protocol {\em sketch},
which is a protocol with holes in it; each hole is associated
with a grammar. The goal is to choose expressions from the grammars to fill the
holes in the sketch to obtain a protocol that satisfies a given property. Our
target protocols are represented symbolically in the \TLA
language~\cite{lamport2002specifying}, a popular specification language for
distributed systems used in both academia and industry~\cite{NewcombeAmazon2015}.

In this paper we propose a novel method for synthesis of distributed protocols by sketching.
Our method follows the {\em counterexample-guided inductive synthesis} (CEGIS) paradigm~\cite{ArmandoSTTT2013,GulwaniPolozovSingh2017}, but with some crucial innovations, explained below.
In CEGIS, a 
 {\em learner} is responsible for
generating candidate protocols, and a {\em verifier} is responsible for checking
whether the candidate protocols satisfy the property. In addition to saying
correct/incorrect, the verifier also provides a counterexample when the
candidate protocol is incorrect. 
Existing approaches {\em generalize} the counterexample to a set of constraints
which allows to {\em prune} the search
space~\cite{ScenariosHVC2014,DBLP:conf/cav/AlurRSTU15,egolf2024-arxiv}.
Such pruning eliminates candidate protocols that are sure to
exhibit previously encountered counterexamples before they make it to the
verifier, which is crucial for scalability: e.g.,~\cite{egolf2024-arxiv} show that pruning constraints can be used
to prevent hundreds of thousands of model-checker calls.


Unfortunately, pruning alone only goes so far. Even when many calls to the
verifier are avoided, the sheer size of the search space is often huge.
Therefore, enumerating all candidates (and checking each one of them against
the pruning constraints) can be prohibitive in itself. 

In this paper, we address this problem by introducing a novel 
search space reduction technique called {\em interpretation reduction}. While
pruning constraints eliminate incorrect candidate protocols before they are
sent to the verifier, reduction avoids enumerating redundant candidates in the
first place. Reduction can be achieved by choosing an equivalence relation,
partitioning the search space into classes of equivalent expressions, and
enumerating at most one expression from each equivalence class. One choice of
equivalence relation is {\em universal equivalence}. Two expressions are
universally equivalent if they evaluate to the same value under all
interpretations (an {\em interpretation} is an assignment mapping terminal
symbols of sketch grammars to values). E.g., $x+y$ and $y+x$ are universally equivalent.
Our expression enumerator does better, and will only generate semantically
distinct expressions up to a coarser notion of equivalence, namely, {\em
interpretation equivalence}. If $\alpha$ is an interpretation, then two
expressions, $e_1$ and $e_2$, are interpretation equivalent with respect to
$\alpha$ if they evaluate to the same value under $\alpha$. E.g., $x$ and $x+x$
are not universally equivalent, but
are interpretation equivalent with respect to the interpretation $\alpha =
[x\mapsto 0]$. 
Our experiments show that our method is almost always faster than the state of the art,
and in many cases, 
more than 100 times faster. 

In addition to improving efficiency, interpretation reduction also enables
recognizing {\em unrealizable} synthesis instances, i.e., instances where there is no solution.
Under
certain conditions (met in all of our benchmarks), interpretation
reduction partitions the search space into finitely many equivalence classes.
Then, if the algorithm has enumerated an expression from each equivalence class
and none of them satisfy the property, the algorithm can infer that the
synthesis instance is unrealizable and terminate.
Our experiments show that our method
is able to
recognize five times more unrealizable synthesis instances than 
the state of the art.

In summary, this work makes the following contributions: 
(1) a novel search space reduction technique based on interpretation equivalence, 
(2) the first, to our knowledge, method for synthesizing \TLA protocols by
sketching that is guaranteed to terminate on unrealizable synthesis instances
when there are a finite number of equivalence classes modulo interpretation
equivalence,
and (3) a synthesis tool called \tool that implements our method and
outperforms the state of the art. Most notably, \tool was able to synthesize an
entire \TLA protocol from scratch in less than 3 minutes where the state of the
art timed out after an hour. To the best of our knowledge, ours is the first
tool to synthesize a \TLA protocol from scratch.

\section{Distributed Protocol Synthesis}
\label{sec:prelims}


\subsubsection{Protocol Representation in \TLA}

We consider symbolic transition systems modeled in
\TLA~\cite{lamport2002specifying}, e.g., as shown in
\Fig~\ref{fig:protocol_example}. A primed variable, e.g.,
$\textit{vote\_yes}'$, denotes the value of the variable in the next state.
Formally, a {\em protocol} is a tuple $\angles{\Vars, \Init, \Next}$. \Vars is
the set of {\em state variables} (e.g. \Fig~\ref{fig:protocol_example}
line~\ref{line:vars}). \Init and \Next are predicates specifying, respectively,
the {\em initial states} and the {\em transition relation} of the system,  as
explained in detail in the next subsections. 

In \Fig~\ref{fig:protocol_example}, the protocol is parameterized by the set of
nodes participating in the protocol, which is denoted by the declaration on
line \ref{line:constant}. As in
\cite{egolf2024-arxiv}, we are able to synthesize parameterized protocols. For
space reasons, we omit all discussion of parameterized protocols in this paper
and refer the reader to \cite{egolf2024-arxiv} for details.

\subsubsection{Protocol Semantics}

A {\em state} of a protocol is an assignment of values to the variables
in \Vars. 
We write $s[v]$ to denote the value of the state variable $v$ in state $s$.
\Init is a predicate mapping a state to true or false; if a state
satisfies \Init (if it maps to true), it is an initial state of the protocol.
\ifdefined\extendedversion
For instance, the \Init predicate beginning on line~\ref{line:init} of
\Fig~\ref{fig:protocol_example} requires that all shown state variables are
initially the empty set. 
\fi

The transition relation \Next is a predicate mapping a pair of states to true or
false. If a pair of states $(s,t)$ satisfies \Next, then there is a {\em
transition} from $s$ to $t$, and we write $s \rightarrow t$. A state is {\em
reachable} if there exists a {\em run} of the protocol containing that
state. A run of a protocol is a possibly infinite sequence of states
$s_0, s_1, s_2 ...$ such that (1) $s_0$ satifies \Init, (2) $s_i \rightarrow
s_{i+1}$ for all $i \geq 0$, and (3) the sequence satisfies optional {\em
fairness constraints}. We omit a detailed discussion of fairness, but
it is used to exclude runs where a transition is never taken, even though it
was enabled infinitely often.

\subsubsection{Properties and Verification}

We support standard temporal safety and liveness {\em properties} for specifying
protocol correctness. Safety is often specified using a state {\em invariant}: a
predicate mapping a state to true or false. A protocol satisfies a
state invariant if all reachable states satisfy the invariant. A protocol
satisfies a temporal property if all runs (or fair runs, if fairness is
assumed) satisfy the property. 

\begin{figure}
\centering
\begin{minipage}[b]{0.48\textwidth}
\begin{lstlisting}[mathescape,numbers=left,xleftmargin=20pt,escapechar=|]
$\text{CONSTANT Node}$|\label{line:constant}|
$\textit{vars} := (\textit{vote\_yes},\textit{go\_commit},\textit{go\_abort})$|\label{line:vars}|
$\textit{GoCommit} :=$|\label{line:go-commit}|
    $\wedge\ \textit{vote\_yes} = \textit{Node}$|\label{line:go-commit-vote-yes}|
    $\wedge\ \textit{go\_commit}' = \textit{Node}$|\label{line:go-commit-go-commit}|
    $\wedge\ \textit{go\_abort}' = \textit{go\_abort}$|\label{line:go-commit-go-abort}|
$\textit{VoteYes}(n) :=$|\label{line:vote-yes}|
    $\wedge\ \textit{vote\_yes}' = \textit{vote\_yes} \cup \{n\}$|$\label{line:vote-yes-vote-yes}$|
    $\wedge\ \textit{go\_commit}' = \textit{go\_commit}$
    $\wedge\ \textit{go\_abort}' = \textit{go\_abort}$
\end{lstlisting}
\end{minipage}
\hfill
\begin{minipage}[b]{0.48\textwidth}
\begin{lstlisting}[mathescape,numbers=left,xleftmargin=20pt,escapechar=|,firstnumber=\thelstnumber]
$\Init :=$|\label{line:init}|
    $\wedge\ \textit{vote\_yes} = \emptyset$
    $\wedge\ \textit{go\_commit} = \emptyset$ 
    $\wedge\ \textit{go\_abort} = \emptyset$
$\Next :=$|\label{line:next}|
    $\vee\ \textit{GoCommit}$|\label{line:next-go1}|
    $\vee\ \exists n \in \textit{Node} : \textit{VoteYes}(n)$|\label{line:next-vote-yes}|
\end{lstlisting}
\end{minipage}
\vspace{-1em}
\caption{An example of a \TLA protocol (excerpt).}
\vspace{-1.5em}
\label{fig:protocol_example}
\end{figure}

\subsubsection{Modeling Conventions}

We adopt  standard conventions on the syntax used to represent protocols,
particularly on how \Next is written. Specifically, we decompose \Next into a
disjunction of {\em actions} (e.g. \Fig~\ref{fig:protocol_example}
lines~\ref{line:next}-\ref{line:next-vote-yes}). An action is a predicate
mapping a pair of states to true or false; e.g., action \textit{GoCommit} of
\Fig~\ref{fig:protocol_example}. We  decompose an action into the conjunction
of a {\em pre-condition} and a {\em post-condition}. A pre-condition is a
predicate mapping a state to true or false; if the pre-condition of an action
is satisfied by a state, then we say the action is {\em enabled} at that state.
For instance, \Fig~\ref{fig:protocol_example}
line~\ref{line:go-commit-vote-yes} says that action \textit{GoCommit} is
enabled only when all nodes have voted yes.

We decompose a post-condition into a conjunction of {\em post-clauses}, one for
each state variable. A post-clause determines how its associated state variable
changes when the action is taken. For instance, \Fig~\ref{fig:protocol_example}
line~\ref{line:go-commit-go-commit} shows a post-clause for the state variable
\textit{go\_commit}, denoted by priming the variable name:
$\textit{go\_commit}'$.

If $s \rightarrow t$ is a transition and $(s,t)$ satisfies an action $A$, we
can say that {\em $A$ is taken} and write $s \xrightarrow{A} t$. Note that
$(s,t)$ may satisfy multiple actions and we may annotate the transition with
any of them. In this way, runs of a protocol may be outfitted with a sequence
of actions. Annotating runs of a protocol with actions is critical for our
synthesis algorithm, since annotations allow us to ``blame'' particular actions
for causing a counterexample run. 


\subsubsection{Protocol Synthesis}


A tuple $\angles{\Vars,\Holes,\Init,\Next_0}$ is a {\em protocol sketch}, where
\Vars and \Init are as in a \TLA protocol and $\Next_0$ is a transition
relation predicate containing the hole names found in \Holes. \Holes is a
finite (possibly empty) set of tuples, each containing a hole name $h$, a list
of argument symbols $\vec v_h$, and a grammar $G_h$. A hole represents an
uninterpreted function over the arguments $\vec v_h$. Each hole is associated
with exactly one action $A_h$ and it appears exactly once in that action. The
grammar of a hole defines the set of candidate expressions that can fill the
hole. Because $\Holes$ is finite, we will assume without loss of generality that
all holes are assigned an index to order them: $h_1,...,h_n$.

For example, a sketch can be derived from \Fig~\ref{fig:protocol_example} by
replacing the update of line~\ref{line:vote-yes-vote-yes} with
$\textit{vote\_yes}' = h(\textit{vote\_yes}, n)$, where $h$ is the hole name,
the hole has arguments \textit{vote\_yes} and $n$, 
and the action of the hole is $\textit{VoteYes}(n)$. One possible
grammar for this hole is (in Backus Normal Form): 
$$E ::= \emptyset\ |\ \{n\}\ |\ \textit{vote\_yes}\ |\ (E\cup E)\ |\ (E\cap E)\ |\ (E\setminus E) $$
which generates all standard set expressions over the empty set, the singleton
set $\{n\}$, and the set \textit{vote\_yes}. We note that, in general, each hole
of a sketch may have its own distinct grammar. 

A hole is either a {\em pre-hole} or a {\em post-hole}. If the hole is a
pre-hole, it is a placeholder for a pre-condition of the action. If the hole is
a post-hole, it is a placeholder for the right-hand side of a post-clause of
the action, e.g., as in $\textit{vote\_yes}' = h(\textit{vote\_yes}, n)$, where
$h$ is a post-hole. 
We write $h.\textit{var}$ to denote the variable associated with a post-hole
$h$. For instance, $h.\textit{var} = \textit{vote\_yes}$ for the post-hole $h$
shown prior.
We do not consider synthesis of the initial state predicate
and therefore no holes appear in \Init. The arguments of a hole $h$ may include
any of the the state variables in \Vars. If $h$ is a pre-hole, then it returns
a boolean. If the hole is a post-hole, its type is the same as its associated
variable, e.g., hole $h$ above has the same type as \textit{vote\_yes}.

\label{sec:problem-statements}

A {\em completion} of a sketch is a protocol derived from the sketch by
replacing each hole with an expression from its grammar. If the sketch is clear
from context or irrelevant, we may write the completion as a tuple
$(e_1,\ldots,e_n)$, where $e_i$ is the expression filling the $i$th hole.
Informally, the synthesis task is to find a completion of the protocol that
satisfies a given property. 

\begin{problem}
\label{problem:instance}
Let $\angles{\Vars,\Holes,\Init,\Next_0}$ be a sketch and \prop a
property. If one exists, find a completion of the sketch that satisfies \prop.
Otherwise, recognize that all completions violate \prop.
\end{problem}

We briefly note that Problem~\ref{problem:instance} asks to synthesize
protocols that are only guaranteed to be correct for a particular choice of
parameters (e.g. the number of nodes in the protocol). \cite{egolf2024-arxiv}
takes a similar approach: (1) solve Problem~\ref{problem:instance} and then (2)
use the \TLA Proof System (TLAPS) to show that the synthesized protocol is
correct for all parameter choices. Our work in this paper improves step (1) of
this approach, and we refer the reader to~\cite{egolf2024-arxiv} for details on
step (2). So, when we say ``completion satisfies \prop,'' we only guarantee
that the completion satisfies \prop for the parameter choice used in the
synthesis process.

\section{Our Approach}

The high-level architecture of our approach is shown in
Fig.~\ref{fig:arch-top2}. Our method is a counterexample-guided inductive
synthesis (CEGIS) algorithm, so it coordinates between a learner and verifier.
We use the TLC model checker as our verifier~\cite{tlcmodelchecker}. Our
learner works by coordinating between (1) an expression enumerator, (2) a
counterexample generalizer, and (3) a search space reduction maintainer. The
expression enumerator essentially generates candidate protocols by exploring
the non-terminals of the protocol sketch's grammars in a breadth-first manner.
The search space reduction is used by the expression enumerator to explore the
grammars in a way that avoids generating redundant expressions. The
counterexample generalizer converts counterexamples to logical constraints on
expressions. The expression enumerator uses these {\em pruning constraints}
along with a {\em constraint checker} to eliminate candidate protocols that
exhibit previously encountered counterexamples, before they make it to the
verifier.

\begin{figure}
    \centering
    \includegraphics[width=0.9\textwidth]{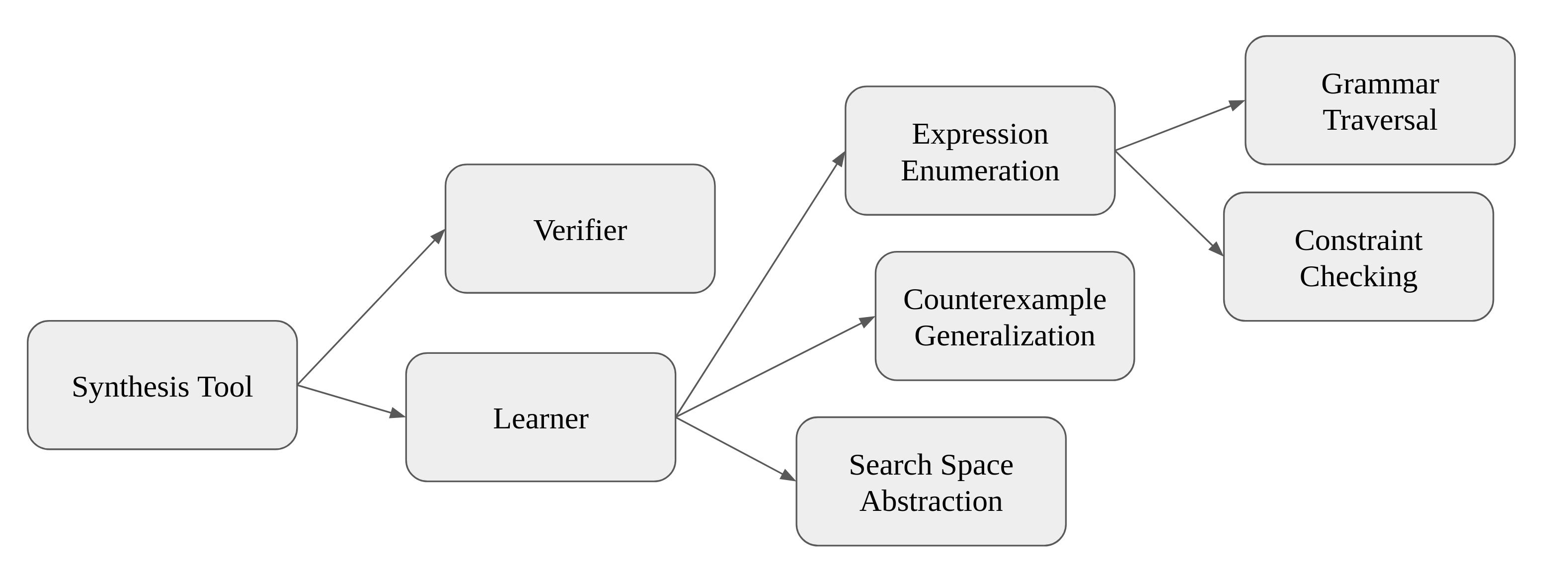}
    \caption{The architecture of our method.}
    \label{fig:arch-top2}
\end{figure}

\begin{figure}
\centering
\begin{minipage}[t]{0.48\textwidth}
\begin{lstlisting}[mathescape,numbers=left,xleftmargin=15pt,escapechar=|]
$\textsc{Synth}(S, \Phi) :=$
  $U := \textit{init\_search\_space}(S)$|\label{line:init-space}|
  $\text{while }\True$:
    $e := \textsc{Pick}(U)$|\label{linepicke}|
    $\text{if } e = \bot \text{ then return } \bot$ |\label{linenosolution}|
\end{lstlisting}
\end{minipage}
\hfill
\begin{minipage}[t]{0.48\textwidth}
\begin{lstlisting}[mathescape,numbers=left,xleftmargin=25pt,escapechar=|,firstnumber=\thelstnumber]
    $r, \textit{correct} := \textsc{Check}(e, \Phi)$|\label{lineverifier}|
    $\text{if } \textit{correct} \text{ then return } e$ |\label{linesolution}|
    $\pi := \textsc{Generalize}(r,e,S)$
    $U := \textsc{Prune}(U,\pi)$|\label{lineprune}|
    $U := \textsc{Abstract}(U, \pi)$|\label{line:abstract}|
\end{lstlisting}
\end{minipage}
\vspace{-1em}
\caption{Our algorithm.}
\vspace{-1.5em}
\label{fig:our-algorithm}
\end{figure}


 Fig.~\ref{fig:our-algorithm} provides procedural details of our approach. The
top-level procedure $\textsc{Synth}$ takes as input a sketch $S$ and a property
$\Phi$. The goal is to find a completion of $S$ that satisfies $\Phi$. In a
nutshell, $\textsc{Synth}$ works by generating candidate expressions $e$ from
the hole grammars (line~\ref{linepicke}) until either the search space $U$
becomes empty, which means there exists no correct completion
(line~\ref{linenosolution}), or until a correct completion is found
(line~\ref{linesolution}). If the current $e$ is incorrect, the counterexample
$r$ returned by the verifier (line~\ref{lineverifier}) is {\em generalized}
into a {\em pruning constraint} $\pi$ which helps prune as well as reduce the
search space (lines~\ref{lineprune}-\ref{line:abstract}).

The \textsc{Pick} procedure generates expressions that (1) do not exhibit
previously encountered counterexamples and (2) are distinct up to
interpretation equivalence. Objective (1) is achieved by maintaining a set of
{\em pruning constraints}, while Objective (2) is achieved by maintaining an
{\em equivalence reduction}. 
The \textsc{Pick} procedure is able to check if a candidate expression
satisfies the accumulated pruning constraints, but the \textsc{Generalize}
procedure is responsible for computing the pruning constraints and the 
\textsc{Prune} procedure for applying them to the search space.
Likewise, \textsc{Pick} is able to explore the equivalence classes of
expressions induced by the interpretation reduction, but the \textsc{Abstract}
procedure is responsible for actually maintaining the reduction.

In Section~\ref{sec:pruning} we describe pruning constraints and the
\textsc{Generalize} routine in more detail. Then in
Section~\ref{sec:abstraction} we describe the representation of the
interpretation reduction, how we explore the reduced search space, and how the
algorithm maintains the reduction as it encounters new counterexamples. Finally,
we discuss the soundness, completeness, and termination of our algorithm in
Section~\ref{sec:correctness}.

\section{Pruning Constraints}
\label{sec:pruning}

\subsubsection{Counterexamples}

A counterexample is a finite run $r$ annotated with actions: $s_0
\xrightarrow{A_1} s_1 \ldots \xrightarrow{A_k} s_k$. We support four types of
counterexamples: safety, deadlock, liveness, and stuttering. $r$ is a {\em
safety violation} of protocol $e$ if $r$ is a run of $e$ and $r$ violates a
safety property. $r$ is a {\em deadlock violation} of protocol $e$ if $r$ is a
run of $e$ and no action is enabled in $s_k$. $r$ is a {\em liveness violation}
of protocol $e$ if (1) $r$ is a run of $e$, (2) a {\em loop} of $r$ is an {\em
unfair cycle}, and (3) the infinite run induced by the cycle violates a
liveness property. A loop of $r$ is a suffix of $r$ that starts and ends at the
same state. 
A loop is a {\em strongly fair cycle} if there exists a strongly fair action
$A$ that is not taken in the loop, but is enabled at some state in the loop.
A loop is a {\em weakly fair cycle} if there exists a weakly fair action $A$
that is not taken in the loop, but is enabled at all states in the loop.
A loop is an unfair cycle if it neither a strongly, nor weakly fair cycle. A
run $r$ is a {\em stuttering violation} of protocol $e$ if (1) $r$ is a run of
$e$, (2) all actions that are fair and enabled in $s_k$ transition back to
$s_k$ (they are self-loops), (3) the infinite run induced by a self-loop at
$s_k$ violates a liveness property, and (4) $s_k$ is not a deadlock state.

\subsubsection{Pruning Constraints}
The syntax of pruning constraints is as follows, where $h$ is any hole name,
$\alpha$ is an assignment of values to the arguments of $h$, and $c$ is a
constant value:
%
$
    \pi ::= d\ |\ \pi\lor\pi\ |\ \pi\land\pi
    \text{ and }
    d ::= h(\alpha) \neq c.
$
%
A completion $(e_1,...,e_n)$ satisfies a pruning constraint of the form $\pi_1
\lor \pi_2$ if and only if it satisfies $\pi_1$ or $\pi_2$. Likewise for
$\pi_1\land\pi_2$, but satisfying both $\pi_1$ and $\pi_2$. The completion
satisfies a pruning constraint of the form $h_i(\alpha) \neq c$ if and only if
$e_i(\alpha) \neq c$, where $e(\alpha)$ denotes the result of substituting the
values of $\alpha$ into $e$.

$\textsc{Generalize}(r,e,S)$ constructs a pruning constraint as a function of
the counterexample $r$, the completion $e$, and the sketch $S$. We assume that
the expression in $e$ for hole $h$ is $e_h$. We also assume that the pre- and
post-holes of $S$ are $\textit{pre}(S)$ and $\textit{post}(S)$ respectively.
Finally, we assume access to the expressions for the fixed, non-hole pre- and
post-conditions of the actions in $S$. 

At a high level, our pruning constraints encode (1) which pre-conditions should
be disabled, (2) which pre-conditions should be enabled, and (3) which
post-conditions should evaluate to a different value in future completions. Our
pruning constraints are each a disjunction of clauses that specify these
conditions and hence we use the following three parametric clauses as {\em gadgets}:
\begin{align*}
    \chi_\textit{move}(\Acts,\Xi,e) &:=
        \bigvee_{A\in\Acts} \bigvee_{s\in\Xi} \bigvee_{h\in\textit{post}(A)} h(s)\neq e_h(s)\\
    \chi_\textit{disable}(\Acts,\Xi) &:=
    \bigvee_{A\in\Acts} \bigvee_{s\in\Xi} \bigvee_{\ h\in\textit{pre}(A)} h(s)\neq\True\\
    \chi_\textit{enable}(\Acts,\Xi) &:=
        \bigvee_{A\in\Acts} \bigwedge_{s\in\Xi} \bigwedge_{\ h\in\textit{pre}(A)} h(s)\neq\False
\end{align*}
Intuitively, the gadget $\chi_\textit{move}$ says that some action $A$ directs
a predecessor state $s$ to some successor state $s'$ that is different from the
one specified by the completion $e$. Formally, $\chi_\textit{move}(\Acts, \Xi,
e)$ says there must exist an action $A$ in $\Acts$, a state $s$ in $\Xi$, and
a post-hole $h$ in $A$ such that $h(s)\neq e_h(s)$. Note: a state $s$ is an
assignment of state variables to values, $e_h$ is the expression chosen for
hole $h$, and $e_h(s)$ is the result of substituting $s$ into $e_h$.

The gadget $\chi_\textit{disable}(\Acts, \Xi)$ says there must exist an action
$A$ in $\Acts$, a state $s$ in $\Xi$, and a pre-hole $h$ in $A$ such that $h$
does not evaluate to $\True$ at state $s$---i.e. $A$ is disabled at $s$ by
virtue of $h$. Finally, $\chi_\textit{enable}(\Acts, \Xi)$ says that there
must exist an action $A$ in $\Acts$ such that for all states $s$ in $\Xi$ and
all pre-holes $h$ in $A$, $h$ does not evaluate to $\False$ at $s$---i.e. $A$
is enabled at all states in $\Xi$ if non-hole pre-conditions allow.

\subsubsection{Generalization of Safety Violations}
Then for a safety counterexample $r = s_0 \xrightarrow{A_1} s_1 \ldots
\xrightarrow{A_k} s_k$ of completion $e$ of sketch $S$, we construct the pruning
constraint $\pi_\textit{safe}(r,e,S)$ as

$$\pi_\textit{safe} := \bigvee_{i=0}^{k-1} \chi_\textit{move}(\{A_{i+1}\}, \{s_i\},
e) \vee \chi_\textit{disable}(\{A_{i+1}\}, \{s_i\})$$ 
which says that we need to move a transition to a different state or disable a
transition in order to avoid the safety violation $r$.

\subsubsection{Generalization of Deadlock Violations}
For a deadlock counterexample, we construct the pruning constraint
$\pi_\textit{dead}(r,e,S)$ as
$$\pi_\textit{dead} :=
\pi_\textit{safe}\vee\chi_\textit{enable}(\Acts_{\textit{dead}}, \{s_k\})$$
where $\Acts_{\textit{dead}}$ is the set of {\em can-enable} actions in sketch
$S$ at state $s_k$. An action $A$ is can-enable at $s_k$ if all (non-hole)
pre-conditions of $A$ evaluate to \True at $s_k$. This constraint says that we
either need to disable the path to the deadlocked state or we need to enable an
action that undeadlocks $s_k$.

\subsubsection{Generalization of Liveness Violations}
For a liveness counterexample, we construct the pruning constraint
$\pi_\textit{live}(r,e,S)$ as
$$\pi_\textit{live} := 
\pi_\textit{safe} 
\vee \chi_\textit{enable}(\Acts_{\textit{weak}}, \textit{cycle}(r))
\vee \bigvee_{s\in\textit{cycle}(r)}\chi_\textit{enable}(\Acts_{\textit{strong}}^{(s)}, \{s\})
$$
where $\textit{cycle}(r)$ is the set of states in the loop of $r$, and
$\Acts_{\textit{strong}}^{(s)}$ is the set of strongly fair, disabled,
can-enable actions in $S$ at state $s$. This constraint says that we need to
either (1) alter the lasso $r$, (2) enable a strongly fair action in some state
of the loop, or (3) enable a weakly fair action in all states of the loop.

\subsubsection{Generalization of Stuttering Violations}

Finally, for a stuttering counterexample, we construct the pruning constraint
$\pi_\textit{stut}(r,e,S)$ as
$$\pi_\textit{stut} :=
\pi_\textit{safe}\vee\chi_\textit{move}(\Acts_\textit{enabled},
\{s_k\}, e)\vee\chi_\textit{enable}(\Acts_\textit{disabled}, \{s_k\})$$
where $\Acts_\textit{enabled}$ is the set of enabled fair actions in state
$s_k$ and $\Acts_\textit{disabled}$ is the set of (strongly or weakly) fair,
disabled, can-enable actions in state $s_k$. Because $s_k$ is a stuttering
state, we know that all enabled fair actions are stuttering on $s_k$ and that
we can avoid the stuttering violation by having one of these actions move to a
state other than $s_k$. We can also avoid the stuttering violation by enabling
a fair action that is disabled at $s_k$. 

Finally, we define $\textsc{Generalize}(r,e,S)$ to be
$\pi_\textit{vtype}(r,e,S)$ where
$\textit{vtype}\in\{\textit{safe,dead,live,stut}\}$ is the violation type of
$r$. 

\begin{lemma}
    \label{lem:pruning-safe}
    Let $r$ be a safety counterexample of completion $e_1$ of sketch $S$. Let
    $\pi = \pi_\textit{safe}(r,e_1,S)$. Then for all completions $e_2$ of $S$,
    $e_2$ satisfies $\pi$ if and only if $r$ is not a safety counterexample of
    $e_2$.
\end{lemma}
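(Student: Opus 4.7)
The plan is to reduce the biconditional to a structural observation: the only components of an action that can change between completions are its holes, while the non-hole pre- and post-conditions of every action are shared by all completions of the sketch. Since whether $r$ violates the safety property depends only on the states appearing in $r$ (and not on the completion under consideration), and $r$ is already a safety counterexample of $e_1$, $r$ automatically violates safety; thus the lemma reduces to showing that $e_2$ satisfies $\pi_\textit{safe}$ if and only if $r$ is not a valid annotated run of $e_2$.

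First I would record a per-transition equivalence. Because the non-hole parts of $A_{i+1}$ are identical in $e_1$ and $e_2$, the annotated transition $s_i \xrightarrow{A_{i+1}} s_{i+1}$ is valid in a completion $e$ if and only if (a) every pre-hole $h$ of $A_{i+1}$ satisfies $e_h(s_i)=\True$, and (b) every post-hole $h$ of $A_{i+1}$ yields the value forced for $h.\textit{var}$ by $s_{i+1}$. Instantiating (b) at $e=e_1$, where the transition is valid by hypothesis, shows that this forced value is exactly $e_{1,h}(s_i)$, so (b) is equivalent to $e_h(s_i)=e_{1,h}(s_i)$ for every post-hole $h$ of $A_{i+1}$.

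For the forward direction, assume $e_2$ satisfies $\pi_\textit{safe}$. Some disjunct witnesses this: either $\chi_\textit{disable}(\{A_{i+1}\},\{s_i\})$ holds, giving a pre-hole $h$ of $A_{i+1}$ with $e_{2,h}(s_i)\neq\True$, or $\chi_\textit{move}(\{A_{i+1}\},\{s_i\},e_1)$ holds, giving a post-hole $h$ with $e_{2,h}(s_i)\neq e_{1,h}(s_i)$. By the per-transition equivalence, either case invalidates $s_i \xrightarrow{A_{i+1}} s_{i+1}$ in $e_2$, so $r$ is not an annotated run of $e_2$ and hence not a safety counterexample. For the reverse direction, assume $r$ is not a safety counterexample of $e_2$. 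Since $r$ still violates the safety property, $r$ must fail to be a valid annotated run of $e_2$, so some transition $s_i \xrightarrow{A_{i+1}} s_{i+1}$ is invalid in $e_2$. The per-transition equivalence then supplies either a pre-hole of $A_{i+1}$ with $e_{2,h}(s_i)=\False$ or a post-hole with $e_{2,h}(s_i)\neq e_{1,h}(s_i)$, matching exactly $\chi_\textit{disable}(\{A_{i+1}\},\{s_i\})$ or $\chi_\textit{move}(\{A_{i+1}\},\{s_i\},e_1)$, so $e_2$ satisfies $\pi_\textit{safe}$.

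The main obstacle, and really the only nontrivial step, is justifying the per-transition equivalence: it rests on the modeling conventions in Section~\ref{sec:prelims} that each post-clause has the shape $x'=h(\cdot)$ (so a single post-hole fully controls the successor value of $h.\textit{var}$) and that the pre-condition is a conjunction including each pre-hole positively. Once these conventions are pinned down, the non-hole parts of the action behave identically in $e_1$ and $e_2$, and the case split above mechanically closes both directions.
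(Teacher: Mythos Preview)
Your proposal is correct and follows essentially the same approach as the paper's proof: both directions hinge on the per-transition case split between $\chi_\textit{move}$ and $\chi_\textit{disable}$, with the non-hole parts of each action held fixed across completions. Your write-up is simply more explicit about two points the paper leaves implicit---the reduction to ``$r$ is a valid annotated run of $e_2$'' (since the safety violation depends only on the state sequence) and the identity $e_{1,h}(s_i)=s_{i+1}[h.\textit{var}]$ that pins down the ``forced value'' in the post-hole case---and you prove the reverse direction directly rather than by contrapositive, but the underlying argument is the same.
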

\ifdefined\extendedversion
\begin{proof}

Suppose $e_2$ satisfies $\pi$. Then for some $i\in\{0,...,k-1\}$, $e_2$
satisfies at least one of $\chi_\textit{move}(\{A_{i+1}\}, \{s_i\}, e_1)$ or
$\chi_\textit{disable}(\{A_{i+1}\}, \{s_i\})$. In the first case, $e_2$ cannot
transition from $s_i$ to $s_{i+1}$ using action $A_{i+1}$ because the
post-condition does not allow $s_{i+1}$ as a successor state. In the second
case, $A_{i+1}$ is disabled at $s_i$ in $e_2$. Therefore, in either case, the
transition $s_i\xrightarrow{A_{i+1}}s_{i+1}$ does not exist in the protocol
$e_2$, and so $r$ is not a counterexample of $e_2$.

Now if $e_2$ does not satisfy $\pi$, for all $i\in\{0,...,k-1\}$, $e_2$
violates both $\chi_\textit{move}(\{A_{i+1}\}, \{s_i\}, e_1)$ and
$\chi_\textit{disable}(\{A_{i+1}\}, \{s_i\})$. In other words, for all $i$,
$A_{i+1}$ transitions from $s_i$ to $s_{i+1}$ in $e_2$ and is enabled at $s_i$.
Therefore, every transition in $r$ exists in $e_2$, and so $r$ is a safety
counterexample of $e_2$. \qed

\end{proof}

\else
\fi

\begin{lemma}
    \label{lem:pruning-dead}
    Let $r$ be a deadlock counterexample of completion $e_1$ of sketch $S$. Let
    $\pi = \pi_\textit{dead}(r,e_1,S)$. Then for all completions $e_2$ of $S$,
    $e_2$ satisfies $\pi$ if and only if $r$ is not a deadlock counterexample
    of $e_2$.
\end{lemma}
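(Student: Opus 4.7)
The plan is to prove both directions of the biconditional by splitting $\pi_\textit{dead} = \pi_\textit{safe}\vee\chi_\textit{enable}(\Acts_\textit{dead},\{s_k\})$ along its two disjuncts and matching each to a necessary-and-sufficient condition for $r$ failing to be a deadlock counterexample of $e_2$. Observe first that, by definition, $r$ is a deadlock counterexample of $e_2$ iff (i) every transition of $r$ is a valid transition in $e_2$ and (ii) no action is enabled at $s_k$ in $e_2$. Negating, $r$ is not a deadlock counterexample of $e_2$ iff some transition of $r$ does not exist in $e_2$ or some action is enabled at $s_k$ in $e_2$. This two-disjunct decomposition mirrors $\pi_\textit{dead}$ exactly, and the proof becomes a matter of showing that each disjunct of the constraint corresponds to the matching disjunct of the negation.

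For the forward direction, suppose $e_2$ satisfies $\pi_\textit{dead}$. If $e_2$ satisfies $\pi_\textit{safe}$, then by the reasoning used in Lemma~\ref{lem:pruning-safe}, some transition $s_i\xrightarrow{A_{i+1}}s_{i+1}$ of $r$ is missing in $e_2$, so $r$ is not a run of $e_2$ and therefore not a deadlock counterexample. Otherwise, $e_2$ satisfies $\chi_\textit{enable}(\Acts_\textit{dead},\{s_k\})$, which provides an action $A\in\Acts_\textit{dead}$ all of whose pre-holes evaluate to a value $\neq\False$ (hence, since pre-holes are boolean, to $\True$) at $s_k$; combined with the fact that $A$ is can-enable at $s_k$ by definition of $\Acts_\textit{dead}$, this means $A$ is enabled at $s_k$ in $e_2$, so $s_k$ is not deadlocked.

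For the backward direction, suppose $r$ is not a deadlock counterexample of $e_2$. If some transition of $r$ is missing in $e_2$, then (again by the argument of Lemma~\ref{lem:pruning-safe}) $e_2$ satisfies $\pi_\textit{safe}$, hence $\pi_\textit{dead}$. Otherwise, $r$ is still a valid run of $e_2$, so deadlock must fail for the second reason: some action $A$ is enabled at $s_k$ in $e_2$. Then every pre-condition of $A$ holds at $s_k$---in particular the non-hole pre-conditions, so $A\in\Acts_\textit{dead}$, and every pre-hole of $A$ evaluates to $\True$ (hence $\neq\False$) at $s_k$. This exactly witnesses $\chi_\textit{enable}(\Acts_\textit{dead},\{s_k\})$, so $e_2$ satisfies $\pi_\textit{dead}$.

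The main obstacle is keeping clear the distinction between ``can-enable'' (a sketch-level notion that only inspects non-hole pre-conditions) and ``enabled'' (a completion-level notion that also inspects the pre-holes). The critical observation that makes the backward direction work is that any action enabled in a completion at $s_k$ must in particular be can-enable in the sketch at $s_k$, which secures its membership in $\Acts_\textit{dead}$. A minor but worth-noting subtlety is that $\chi_\textit{enable}$ is written with $h(s)\neq\False$ rather than $h(s)=\True$; because pre-holes are boolean-valued, these coincide, but this phrasing is what lets the gadget be combined uniformly with the other inequality-based clauses in $\pi_\textit{dead}$.
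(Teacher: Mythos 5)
Your proof is correct and follows essentially the same route as the paper's: split $\pi_\textit{dead}$ into its two disjuncts, use the safety-lemma argument for the $\pi_\textit{safe}$ part, and use the can-enable/enabled relationship for the $\chi_\textit{enable}(\Acts_\textit{dead},\{s_k\})$ part (the paper merely phrases your backward direction in contrapositive form). Your explicit remark that any action enabled in a completion at $s_k$ is necessarily can-enable, hence in $\Acts_\textit{dead}$, is exactly the observation the paper leaves implicit.
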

\ifdefined\extendedversion
\begin{proof}

Suppose $e_2$ satisfies $\pi$. Then $e_2$ satisfies one of
\begin{align*}
&\pi_\textit{safe}(r,e_1,S)&
\\
&\chi_\textit{enable}(\Acts_{\textit{dead}},\{s_k\})&
\end{align*}
In the first case, $r$ is not a counterexample of $e_2$ because the
path to $s_k$ is disabled, as per the safety argument above. In the second
case, $e_2$ satisfies all pre-conditions of some can-enable action at $s_k$,
and so $s_k$ is not a deadlock state in $e_2$. Therefore, $r$ is not a deadlock
counterexample of $e_2$.

Suppose $e_2$ does not satisfy $\pi$. Then $e_2$ violates both
\begin{align*}
&\pi_\textit{safe}(r,e_1,S)&
\\
&\chi_\textit{enable}(\Acts_{\textit{dead}}, \{s_k\})&
\end{align*}
Violating the first clause ensures that all transitions in $r$ exist
in $e_2$, and violating the second clause ensures that no can-enable actions
are enabled at $s_k$. Therefore, $r$ is a deadlock counterexample of $e_2$.
\qed

\end{proof}

\else
\fi

\begin{lemma}
    \label{lem:pruning-live}
    Let $r$ be a liveness counterexample of completion $e_1$ of sketch $S$. Let
    $\pi = \pi_\textit{live}(r,e_1,S)$. Then for all completions $e_2$ of $S$,
    $e_2$ satisfies $\pi$ if and only if $r$ is not a liveness counterexample
    of $e_2$.
\end{lemma}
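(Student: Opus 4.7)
\smallskip

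My plan is to follow the two-directional case-analysis template used in the proofs of Lemmas~\ref{lem:pruning-safe} and~\ref{lem:pruning-dead}. I will split on which disjunct of $\pi_\textit{live} = \pi_\textit{safe} \vee \chi_\textit{enable}(\Acts_\textit{weak},\textit{cycle}(r)) \vee \bigvee_{s\in\textit{cycle}(r)} \chi_\textit{enable}(\Acts_\textit{strong}^{(s)},\{s\})$ holds in $e_2$, and relate each clause to the three conditions defining a liveness counterexample: $r$ being a run, the loop being an unfair cycle, and the induced infinite run violating the liveness property. Note that condition~(3) depends only on the state sequence of the loop, not on the completion, so if $r$ is a run of $e_2$ with the same cycle as in $e_1$, condition~(3) is automatically inherited; the argument therefore turns on conditions~(1) and~(2).

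For the ``if'' direction, assume $e_2$ satisfies $\pi_\textit{live}$. If it satisfies $\pi_\textit{safe}$, then by the argument of Lemma~\ref{lem:pruning-safe} some transition of $r$ does not exist in $e_2$, so $r$ is not a run of $e_2$ and hence not a liveness counterexample. If it satisfies $\chi_\textit{enable}(\Acts_\textit{weak},\textit{cycle}(r))$, then there is a weakly fair action $A \in \Acts_\textit{weak}$ (not taken in the loop and can-enable at every cycle state by construction) whose pre-holes all evaluate to \True{} at every state of the cycle in $e_2$; hence $A$ is enabled at every cycle state in $e_2$, making the loop a weakly fair cycle and so not an unfair cycle. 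If it satisfies some $\chi_\textit{enable}(\Acts_\textit{strong}^{(s)},\{s\})$, the analogous argument at $s$ shows that a strongly fair action not taken in the loop is enabled at $s$, making the loop a strongly fair cycle. In each case $r$ fails to be a liveness counterexample of $e_2$.

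For the ``only if'' direction, assume $e_2$ violates $\pi_\textit{live}$. Violating $\pi_\textit{safe}$ forces, as in Lemma~\ref{lem:pruning-safe}, every transition of $r$ to exist in $e_2$ with the same action annotations, so $r$ is a run of $e_2$ and the cycle is identical. The key step is then to argue the loop is still an unfair cycle in $e_2$. Violating the weak disjunct means that for every $A \in \Acts_\textit{weak}$ there is some cycle state where a pre-hole of $A$ evaluates to \False{} in $e_2$, disabling $A$ there; violating every strong disjunct means that at each $s\in\textit{cycle}(r)$, every $A\in\Acts_\textit{strong}^{(s)}$ has some pre-hole evaluating to \False{} at $s$. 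Combined with the fact that the loop was already unfair in $e_1$, this should imply the loop remains unfair in $e_2$, so $r$ is a liveness counterexample of $e_2$.

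The main obstacle is the completeness side of the second step: I must argue that $\Acts_\textit{weak}$ and $\Acts_\textit{strong}^{(s)}$ cover \emph{every} way the loop could newly become a fair cycle under $e_2$. For a fair action $A$ not taken in the loop that was disabled at some cycle state $s$ under $e_1$ but becomes enabled there under $e_2$, it must be the case that the non-hole pre-conditions of $A$ are already satisfied at $s$ (i.e., $A$ is can-enable at $s$), and the change is purely due to its pre-hole evaluating to \True{} in $e_2$; such an $A$ lies in $\Acts_\textit{strong}^{(s)}$ or $\Acts_\textit{weak}$ by the definitions of those sets. Conversely, disabling an action in $e_2$ that was previously enabled cannot turn an unfair cycle into a fair one, since fair-cycle-ness only requires \emph{more} enabledness. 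Nailing down this closure argument precisely is the step most likely to require care, and once it is in place the iff follows cleanly.
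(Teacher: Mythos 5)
Your proposal is correct and follows essentially the same route as the paper's proof: a two-directional case analysis over the disjuncts of $\pi_\textit{live}$, with $\pi_\textit{safe}$ handling whether $r$ remains a run and the $\chi_\textit{enable}$ clauses handling whether the loop becomes a weakly or strongly fair cycle. The closure step you flag (that any action newly making the cycle fair under $e_2$ must be can-enable with the change due only to pre-holes, hence covered by $\Acts_\textit{weak}$ or $\Acts_\textit{strong}^{(s)}$) is in fact glossed over by the paper's proof, and your sketch of it is sound, so if anything you are slightly more explicit than the original.
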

\ifdefined\extendedversion
\begin{proof}
Because $r$ is a liveness violation, let $\textit{cycle}(r) =
s_n\xrightarrow{A_{n+1}}...\xrightarrow{A_k}s_k$ be the loop of the
counterexample, with $s_n = s_k$.

Suppose $e_2$ satisfies $\pi$. Then $e_2$ satisfies at least one of
\begin{align*}
&\pi_\textit{safe}(r,e_1,S)
\\
&\chi_\textit{enable}(\Acts_{\textit{weak}}, \textit{cycle}(r))
\\
&\chi_\textit{enable}(\Acts_{\textit{strong}}^{(s)}, \{s\}) \text{ for some
$s\in\textit{cycle}(r)$}
\end{align*}
In the first case, $r$ is not a
counterexample of $e_2$ because either the path to the loop is disabled or the
loop itself is disabled. In the second case, $e_2$ enables a previously
disabled weakly fair action in all states of the loop, so the loop is unfair.
In the third case, $e_2$ enables a previously disabled strongly fair action in
some state of the loop, so the loop is unfair. Therefore, $r$ is not a liveness
counterexample of $e_2$.

Suppose $e_2$ does not satisfy $\pi$. Then $e_2$ violates all of
\begin{align*}
&\pi_\textit{safe}(r,e_1,S)
\\
&\chi_\textit{enable}(\Acts_{\textit{weak}}, \textit{cycle}(r))
\\
&\chi_\textit{enable}(\Acts_{\textit{strong}}^{(s)}, \{s\}) \text{ for all
$s\in\textit{cycle}(r)$}
\end{align*}
In other words, all transitions in $r$
exist in $e_2$, each previously disabled weakly fair action is still disabled
in at least one state of the loop, and all previously disabled strongly fair
actions are still disabled in all states of the loop. Therefore, $r$ is a
liveness counterexample of $e_2$. Hence, $e_2$ does not satisfy $\Phi$.
\qed

\end{proof}

\else
\fi

\begin{lemma}
    \label{lem:pruning-stut}
    Let $\Phi$ be a property. Let $r$ be a stuttering counterexample of
    completion $e_1$ of sketch $S$. Let $\pi = \pi_\textit{stut}(r,e_1,S)$.
    Then $e_1$ does not satisfy $\pi$. Furthermore, for all completions $e_2$
    of $S$, if $e_2$ does not satisfy $\pi$, then $e_2$ does not satisfy
    $\Phi$.
\end{lemma}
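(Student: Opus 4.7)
The plan is to split the proof into its two parts and handle each by examining the three disjuncts of $\pi_\textit{stut}$. For the first part, I would show $e_1 \not\models \pi$ by verifying that each disjunct is individually false in $e_1$. The safety disjunct $\pi_\textit{safe}(r,e_1,S)$ fails by Lemma~\ref{lem:pruning-safe}, since $r$ is a run of $e_1$ and hence every transition of $r$ exists in $e_1$. The move disjunct $\chi_\textit{move}(\Acts_\textit{enabled},\{s_k\},e_1)$ reduces to disjoined inequalities of the form $e_{1,h}(s_k) \neq e_{1,h}(s_k)$, which are all false. The enable disjunct $\chi_\textit{enable}(\Acts_\textit{disabled}, \{s_k\})$ requires some action in $\Acts_\textit{disabled}$ whose every pre-hole is non-\False at $s_k$; but each such action is disabled-yet-can-enable at $s_k$ in $e_1$, so at least one of its pre-holes evaluates to \False there, falsifying the inner conjunction for every such action.

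For the second part, I would assume $e_2 \not\models \pi$ and combine the three resulting facts. From the failure of $\pi_\textit{safe}$ and Lemma~\ref{lem:pruning-safe}, $r$ is a run of $e_2$ ending at $s_k$. From the failure of $\chi_\textit{move}$, every post-hole $h$ of every $A \in \Acts_\textit{enabled}$ satisfies $e_{2,h}(s_k) = e_{1,h}(s_k)$; since the non-hole post-clauses are shared by $e_1$ and $e_2$ and these actions are self-loops in $e_1$, whenever such an $A$ is still enabled in $e_2$ at $s_k$ it must still be a self-loop. From the failure of $\chi_\textit{enable}$, every action in $\Acts_\textit{disabled}$ remains disabled in $e_2$; and any fair action whose non-hole pre-condition already fails at $s_k$ in $e_1$ is disabled in $e_2$ for the same reason. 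Consequently, the fair actions enabled at $s_k$ in $e_2$ form a subset of $\Acts_\textit{enabled}$, each of which self-loops.

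To conclude, I would build a counterexample of $\Phi$ in $e_2$ by extending $r$ with an infinite self-loop at $s_k$. If at least one fair action of $\Acts_\textit{enabled}$ remains enabled in $e_2$, a round-robin schedule of the available fair self-loops gives a fair run whose state trace matches the one that witnessed $e_1 \not\models \Phi$, so the same liveness property is violated. The main obstacle is the residual case in which every action of $\Acts_\textit{enabled}$ becomes disabled at $s_k$ in $e_2$, because $\pi_\textit{stut}$ constrains neither the pre-holes of those actions nor the behaviour of non-fair actions. To close this case, I would argue that $s_k$ must then either deadlock (directly violating $\Phi$) or retain a self-loop through some enabled non-fair action, in which case looping on it is vacuously fair---no fair actions are enabled at $s_k$---and reproduces the same liveness violation. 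This asymmetry is also what prevents strengthening the statement to a biconditional of the form seen in Lemmas~\ref{lem:pruning-safe}--\ref{lem:pruning-live}.
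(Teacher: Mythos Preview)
Your treatment of the first claim and most of the second claim matches the paper's proof closely. The one genuine gap is in your residual case. The dichotomy you propose there---either $s_k$ deadlocks in $e_2$, or some enabled non-fair action self-loops at $s_k$---is not exhaustive. Nothing you have established rules out the situation where every action in $\Acts_\textit{enabled}$ is disabled at $s_k$ in $e_2$, some non-fair action $B$ is enabled at $s_k$, and $B$ transitions to a state $t\neq s_k$. Then $s_k$ is not a deadlock, yet no protocol action self-loops on it, so the run you propose to build does not exist.

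The paper avoids this by using the simpler case split ``$s_k$ is a deadlock in $e_2$'' versus ``$s_k$ is not a deadlock in $e_2$,'' and in the latter case it does not try to exhibit a concrete self-looping action at all. Instead it checks the four defining conditions of a stuttering counterexample directly: (1) and (4) are immediate, (2) holds (possibly vacuously) because, as you yourself argued, the fair actions enabled at $s_k$ in $e_2$ form a subset of $\Acts_\textit{enabled}$ and each of those still self-loops, and (3) concerns only the state sequence $s_0,\ldots,s_k,s_k,\ldots$, which coincides with the one witnessing the violation in $e_1$. In \TLA that sequence is always realizable via stuttering steps, and here it is fair because no fair action is ever neglected. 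Reorganizing your final paragraph around this deadlock/non-deadlock split closes the gap with no additional machinery.
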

\ifdefined\extendedversion
\begin{proof}

By construction, $e_1$ does not satisfy $\pi$. Specifically, it violates all of
\begin{align*}
&\pi_\textit{safe}(r,e_1,S)&
\\
&\chi_\textit{move}(A_{\textit{enabled}}, \{s_k\}, e_1)&
\\
&\chi_\textit{enable}(A_{\textit{disabled}}, \{s_k\}) &
\end{align*}
It violates $\pi_\textit{safe}$ because $r$ is a run of $e_1$. It violates
$\chi_\textit{move}(A_{\textit{enabled}}, \{s_k\}, e_1)$ because $r$ is a
counterexample of $e_1$ and therefore all enabled fair actions at $s_k$
self-loop on $s_k$. Likewise, it violates $\chi_\textit{enable}
(A_{\textit{disabled}}, \{s_k\})$ because all actions in
$A_{\textit{disabled}}$ are disabled at $s_k$.

Now suppose $e_2$ does not satisfy $\pi$. Then $e_2$ violates all of
\begin{align*}
&\pi_\textit{safe}(r,e_1,S)&
\\
&\chi_\textit{move}(A_{\textit{enabled}}, \{s_k\}, e_1)&
\\
&\chi_\textit{enable}(A_{\textit{disabled}}, \{s_k\})&
\end{align*}
In other words,
$e_2$ all transitions in $r$ exist in $e_2$, all previously enabled fair
actions at $s_k$ self-loop on $s_k$, and all previously disabled actions at
$s_k$ are still disabled. Therefore, if $s_k$ is not a deadlock state in $e_2$,
then $r$ is a stuttering counterexample of $e_2$. Otherwise, $s_k$ is a
deadlock state in $e_2$. In either case, $e_2$ does not satisfy $\Phi$.
\qed

\end{proof}

\else
\fi

\begin{theorem}
    \label{thm:pruning-forward}
    Let $r$ be a counterexample of completion $e_1$ of sketch $S$. Let $\pi =
    \textsc{Generalize}(r,e_1,S)$. Then $e_1$ does not satisfy $\pi$. 
    Furthermore,
    if $r$ is not a stuttering counterexample, then for all completions $e_2$ of
    $S$, if $e_2$ satisfies $\pi$, then $r$ is not a counterexample of $e_2$.
\end{theorem}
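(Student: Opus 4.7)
The plan is to reduce the theorem directly to the four per-violation-type lemmas (Lemmas~\ref{lem:pruning-safe}, \ref{lem:pruning-dead}, \ref{lem:pruning-live}, \ref{lem:pruning-stut}) by a case analysis on the type of counterexample $r$. Since $\textsc{Generalize}(r,e_1,S)$ is defined to be $\pi_\textit{vtype}(r,e_1,S)$, where $\textit{vtype} \in \{\textit{safe},\textit{dead},\textit{live},\textit{stut}\}$ is the violation type of $r$, in each case $\pi$ coincides with the constraint analyzed by exactly one of the lemmas. This makes the proof essentially a bookkeeping exercise on top of the existing lemmas.

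For the first claim, that $e_1$ does not satisfy $\pi$, I would dispatch the stuttering case immediately by invoking the corresponding part of Lemma~\ref{lem:pruning-stut}, which asserts this directly. For the other three cases, I would apply the ``if'' direction of the respective iff statement instantiated at $e_2 := e_1$: those lemmas say that a completion satisfies $\pi$ if and only if $r$ is not a counterexample of that completion, and since by hypothesis $r$ \emph{is} a counterexample of $e_1$, we conclude that $e_1$ cannot satisfy $\pi$.

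For the second claim, restricted to non-stuttering counterexamples, I would again split on the three remaining violation types. In each of the safety, deadlock, and liveness cases, the ``only if'' direction of the matching lemma says: if an arbitrary completion $e_2$ satisfies $\pi$, then $r$ is not a counterexample of $e_2$ (of the corresponding type). Since by the case assumption $r$ has only that single violation type, this is exactly the desired conclusion.

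The only subtle point, which I would briefly comment on to justify the exclusion of stuttering from the second claim, is that Lemma~\ref{lem:pruning-stut} only concludes that $e_2$ fails to satisfy the property $\Phi$, not the stronger statement that $r$ itself is not a stuttering counterexample of $e_2$. This is unavoidable because in $e_2$ the state $s_k$ could become a genuine deadlock rather than a stuttering state, changing the violation type while still witnessing a failure of $\Phi$; the theorem's hypothesis deliberately sidesteps this asymmetry. Thus the main ``obstacle'' here is not a technical one but rather correctly aligning the scope of the second conjunct with what the stuttering lemma actually guarantees.
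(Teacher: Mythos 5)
Your proposal is correct and matches the paper's own proof, which likewise reduces the theorem to Lemmas~\ref{lem:pruning-safe}--\ref{lem:pruning-stut} by case analysis on the violation type, using the ``satisfies $\pi$ implies $r$ not a counterexample'' direction (instantiated at $e_1$ for the first claim) and handling stuttering separately since Lemma~\ref{lem:pruning-stut} only yields that $e_1$ violates $\pi$. Your remark on why the second conjunct excludes stuttering is a fair articulation of the same asymmetry the paper's statement is designed around.
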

\begin{proof}
    This follows from Lemmas~\ref{lem:pruning-safe}-\ref{lem:pruning-stut}.
    \ifdefined\extendedversion
    $r$ is a safety, deadlock, liveness, or stuttering counterexample, so we
procede by case analysis.

\textbf{Safety.} If $r$ is a safety counterexample, then $\pi =
\pi_\textit{safe}$. Therefore, by Lemma~\ref{lem:pruning-safe}, if some
completion $e_2$ satifies $\pi$, then $r$ is not a counterexample of $e_2$.
Since $r$ is a counterexample of $e_1$, we can conclude that $e_1$ does not
satisfy $\pi$.

\textbf{Deadlock.} If $r$ is a deadlock counterexample, then $\pi =
\pi_\textit{dead}$. Therefore, by Lemma~\ref{lem:pruning-dead}, if some
completion $e_2$ satisfies $\pi$, then $r$ is not a counterexample of $e_2$.
Since $r$ is a counterexample of $e_1$, we can also conclude that $e_1$ does
not satisfy $\pi$.

\textbf{Liveness.} If $r$ is a liveness counterexample, then $\pi =
\pi\textit{live}$. Therefore, by Lemma~\ref{lem:pruning-live}, if some
completion $e_2$ satisfies $\pi$, then $r$ is not a counterexample of $e_2$.
Since $r$ is a counterexample of $e_1$, we can also conclude that $e_1$ does
not satisfy $\pi$.

\textbf{Stuttering.} For this case, we only need to show $e_1$ does not satisfy
$\pi_\textit{stut}$. This follows from a direct application of
Lemma~\ref{lem:pruning-stut}.

    \else
    See~\cite{full-tacas} for proofs.
    \fi
\qed
\end{proof}

\begin{theorem}
    \label{thm:pruning-backward}
    Let $\Phi$ be a property. Let $r$ be a counterexample to $\Phi$ of
    completion $e_1$ of sketch $S$. Let $\pi = \textsc{Generalize}(r,e_1,S)$.
    Then for all completions $e_2$ of $S$, if $e_2$ does not satisfy $\pi$,
    then $e_2$ does not satisfy $\Phi$.
    Futhermore, if $r$ is not a stuttering
    counterexample and $e_2$ does not satisfy $\pi$, then $r$ is a
    counterexample of $e_2$.
\end{theorem}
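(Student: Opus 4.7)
The plan is to prove both parts by a simple case analysis on the type of the counterexample $r$, leveraging the biconditional structure of Lemmas~\ref{lem:pruning-safe}, \ref{lem:pruning-dead}, and \ref{lem:pruning-live}, together with the one-directional guarantee of Lemma~\ref{lem:pruning-stut}. Recall that $\textsc{Generalize}(r,e_1,S)$ is defined to dispatch to $\pi_\textit{safe}$, $\pi_\textit{dead}$, $\pi_\textit{live}$, or $\pi_\textit{stut}$ depending on the violation type of $r$, so each case can be handled essentially in isolation.

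For the three non-stuttering cases, I would argue as follows. Suppose $r$ is a safety counterexample (the deadlock and liveness subcases are analogous). Then $\pi = \pi_\textit{safe}(r,e_1,S)$, so by the \emph{only if} direction of Lemma~\ref{lem:pruning-safe} (contrapositively: if $e_2$ does \emph{not} satisfy $\pi$, then $r$ \emph{is} a safety counterexample of $e_2$), we immediately obtain that $r$ is a counterexample of $e_2$. Since $r$ is by hypothesis a counterexample to $\Phi$, this witnesses that $e_2$ does not satisfy $\Phi$. This simultaneously discharges both the main claim and the \emph{furthermore} clause for the non-stuttering subcases. The deadlock and liveness subcases are identical, substituting Lemma~\ref{lem:pruning-dead} and Lemma~\ref{lem:pruning-live} respectively.

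For the stuttering case, $\pi = \pi_\textit{stut}(r,e_1,S)$, and the claim that $e_2$ does not satisfy $\Phi$ follows directly from the second part of Lemma~\ref{lem:pruning-stut}. The \emph{furthermore} clause explicitly excludes this case, so no further work is required. The reason the \emph{furthermore} clause must exclude stuttering is the asymmetry already visible in the statement of Lemma~\ref{lem:pruning-stut}: in $e_2$, the terminal state $s_k$ of $r$ might transition from being a stuttering state to a deadlock state (because enabled fair actions may become disabled under a new hole filling), so $r$ may fail to be a stuttering counterexample of $e_2$ even though it still certifies $\Phi$-violation.

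Because the three biconditional lemmas do all the heavy lifting, there is no real obstacle in the proof itself; the only point that requires care is being explicit about why the \emph{furthermore} hypothesis is needed, i.e., giving a short textual note that Lemma~\ref{lem:pruning-stut} is strictly weaker than its non-stuttering counterparts, rather than silently invoking it.
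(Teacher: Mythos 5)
Your proposal is correct and follows essentially the same route as the paper's own proof: a case analysis on the violation type, invoking the biconditionals of Lemmas~\ref{lem:pruning-safe}--\ref{lem:pruning-live} (in the direction that a violation of $\pi$ forces $r$ to be a counterexample of $e_2$, hence a $\Phi$-violation) and falling back on the weaker one-directional guarantee of Lemma~\ref{lem:pruning-stut} for stuttering. Your added remark explaining why the \emph{furthermore} clause must exclude stuttering (the terminal state may become a deadlock under a different completion) is a welcome clarification not spelled out in the paper's proof, but the underlying argument is the same.
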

\begin{proof}
    This follows from Lemmas~\ref{lem:pruning-safe}-\ref{lem:pruning-stut}.
    \ifdefined\extendedversion
    $r$ is a safety, deadlock, liveness, or stuttering counterexample, so we
procede by case analysis.

\textbf{Safety.} If $r$ is a safety counterexample, then $\pi =
\pi_\textit{safe}$. Therefore, by Lemma~\ref{lem:pruning-safe}, if some
completion $e_2$ does not satisfy $\pi$, then $r$ is a counterexample of $e_2$.
It follows immediately then that $e_2$ does not satisfy $\Phi$.

\textbf{Deadlock.} If $r$ is a deadlock counterexample, then $\pi =
\pi_\textit{dead}$. Therefore, by Lemma~\ref{lem:pruning-dead}, if some
completion $e_2$ does not satisfy $\pi$, then $r$ is a counterexample of $e_2$.
It follows immediately then that $e_2$ does not satisfy $\Phi$.

\textbf{Liveness.} If $r$ is a livness counterexample, then $\pi =
\pi_\textit{live}$. Therefore, by Lemma~\ref{lem:pruning-live}, if some
completion $e_2$ does not satisfy $\pi$, then $r$ is a counterexample of $e_2$.
It follows immediately then that $e_2$ does not satisfy $\Phi$.

\textbf{Stuttering.} For this case, we only need to show that for all
completions $e_2$ of sketch $S$, if $e_2$ does not satisfy $\pi$, then $e_2$
does not satisfy $\Phi$. This follows from a direct application of
Lemma~\ref{lem:pruning-stut}.

    \else
    See~\cite{full-tacas} for proofs.
    \fi
\qed
\end{proof}


\subsubsection{Exactness of Pruning Constraints}

A pruning constraint $\pi$ is {\em under-pruning} with respect to a sketch $S$
and a counterexample $r$ if there exists a completion $e$ of $S$ such that $e$
satisfies $\pi$ but $r$ is a counterexample of $e$. A pruning constraint $\pi$
is {\em over-pruning} with respect to a sketch $S$ and a counterexample $r$ if
there exists a completion $e$ of $S$ such that $e$ does not satisfy $\pi$ but
$r$ is not a counterexample of $e$. A pruning constraint $\pi$ is {\em exact}
(called ``optimal'' in~\cite{egolf2024-arxiv}) if it is neither under-pruning
nor over-pruning. The safety, deadlock, and liveness pruning constraints
$\pi_\textit{safe}$, $\pi_\textit{dead}$, and $\pi_\textit{live}$, presented
above, are all exact. In contrast, only the safety pruning constraints in prior
work~\cite{egolf2024-arxiv} are exact. The stuttering constraints
$\pi_\textit{stut}$ presented here are not exact, but they are sufficient for
the correctness of our synthesis algorithm (see Section~\ref{sec:correctness}).
\ifdefined\extendedversion
In Appendix~\ref{sec:exact-stut}, we provide an alternative to
$\pi_\textit{stut}$ that is exact.
\else
In the full version of this paper \cite{full-tacas}, we provide an alternative
to $\pi_\textit{stut}$ that is exact. 
\fi
However, this alternative constraint
introduces additional performance overhead without providing sufficient
benefit. Therefore, we do not consider the alternative pruning constraint
further here.

\section{Interpretation Reduction}
\label{sec:abstraction}

In this section we discuss how we represent, use, and maintain the reduced
search space. We conclude the section by showing that our algorithm does not
exclude any correct completions, which is always a risk of reduction
techniques. In Section~\ref{sec:correctness} we show that our algorithm is
sound, complete, and terminating.

\subsubsection{Representation of the Reduced Search Space}

For simplicity, we will temporarily assume that the sketch $S$ has a single
hole $h$. In general, we maintain a reduced search space for each hole in the
sketch. We represent the (reduced) search space $U$
(Fig.~\ref{fig:our-algorithm}) as a tuple $\angles{G, \mathcal A, V, \Pi}$,
where $G$ is the grammar of the hole, $\mathcal A$ is a list of
interpretations, $V$ is a partial map from {\em annotated non-terminals} of $G$
to expressions, and $\Pi$ is the conjunction of all accumulated pruning
constraints.

We write $e_1\equiv_\alpha e_2$ if $e_1$ and $e_2$ are interpretation
equivalent. If $\mathcal A$ is a set of interpretations, then $e_1$ and $e_2$
are interpretation equivalent with respect to $\mathcal A$ if for all $\alpha
\in \mathcal A$, $e_1\equiv_\alpha e_2$.

Intuitively, annotated non-terminals ``label'' the equivalence classes of
expressions induced by interpretation equivalence under $\mathcal A$. Formally,
an annotated non-terminal is a pair $\angles{q, \vec c}$, where $q$ is a
non-terminal in the grammar $G$, and $\vec c = (c_1, \ldots, c_n)$ is a
(possibly empty) tuple of constants. The number of constants $n$ is equal to
the number of interpretations in $\mathcal A$.
We write $\sem{q, \vec
c}_{G,\mathcal A}$ to denote the equivalence class labeled by the annotated
non-terminal $\angles{q, \vec c}$ and omit $G$ and $\mathcal A$ if they are
clear from context. 

If $\alpha_i$ is the $i$-th interpretation in $\mathcal A$ and $q$ is a
non-terminal in the grammar $G$, we write $e\in\sem{q, \vec c}$ if and only if
(1) for all $i$, $e$ evaluates to $c_i$ under $\alpha_i$ and (2) $e$ is
generated by $q$ in $G$. The partial map $V$ has the following invariant: for
every annotated non-terminal $\angles{q, \vec c}$ in the domain of $V$, the
expression $e = V[\angles{q, \vec c}]$ is such that $e\in\sem{q, \vec c}$.
Intuitively, $V$ keeps track of (1) which equivalence classes have been visited
and (2) which enumerated expression represents that equivalence class.

\subsubsection{Enumerating Expressions}

Recall that the \textsc{Pick} procedure is responsible for generating candidate
completions (Fig.~\ref{fig:our-algorithm}, line~\ref{linepicke}). \textsc{Pick} 
works by treating $V$ as a cache of enumerated expressions
and using the rules of the hole grammar $G$ to enumerate larger expressions.
For instance, if $G$ has a rule $q\to (q_1 + q_2)$, and $V[\angles{q_1, (1,
0)}] = x$ and $V[\angles{q_2, (1, 0)}] = y$, then \textsc{Pick} will build the
expression $x + y$ and set $V[\angles{q, (2, 0)}] = x + y$. A critical detail
is that we do not change $V$ if $\angles{q, (2, 0)}$ is already in the domain
of $V$. If an annotated non-terminal is already in $V$, the algorithm 
has already enumerated an expression that is interpretation equivalent to $x +
y$ under $\mathcal A$ for the non-terminal $q$.

\subsubsection{Ensuring Completeness}

Using interpretation reduction allows for improved performance, but we have to
be careful to ensure that we are not erroneously excluding completions. If
$\mathcal A$ does not contain the ``appropriate'' interpretations, then
\textsc{Pick} may return $\bot$ prematurely. Consider a simple example of what
can go wrong. Suppose $G$ is the grammar $E ::= x\ |\ E + 1$ and that we have
the property $\Phi := e \neq 0$. $\Phi$ is violated by $e := x$, since $e$
evaluates to 0 under the interpretation $[x\mapsto 0]$. Now suppose $[x\mapsto
0]$ is not in $\mathcal A$, particularly suppose $\mathcal A$ is empty. Then
\textsc{Pick} will set $V[\angles{E, ()}] = x$ and then try to use the rule
$E\to E + 1$ to generate $x + 1$. However, $x + 1$ and $x$ are interpretation
equivalent under the empty set of interpretations, so \textsc{Pick} will return
$\bot$ before enumerating $x + 1$. $x + 1$ satisfies $\Phi$, so the reduction
has eliminated a valid completion.

We maintain the invariant that $\mathcal A$ contains all interpretations that
appear in the pruning constraints. Specifically, $\textit{interp}(\pi)$ is
defined inductively on the structure of pruning constraints:
$\textit{interp}(h(\alpha) \neq c) = \{ \alpha \}$, and $\textit{interp}(\pi_1
\vee \pi_2) = \textit{interp}(\pi_1\land\pi_2) = \textit{interp}(\pi_1) \cup
\textit{interp}(\pi_2)$. 
Without loss of generality, assume that $\textit{interp}(\Pi)$ is an ordered
list with no duplicates. The expressions in $\textit{interp}(\Pi)$ are
sufficient to ensure the reduction does not exclude any correct completions
(see reasoning across Lemma~\ref{lem:interp-equiv},
Lemma~\ref{lem:pick-complete}, and Theorem~\ref{thm:completeness}).

\begin{lemma}
    \label{lem:interp-equiv}
    Suppose that $e_1\equiv_\mathcal{A}e_2$ with $\mathcal A
    = \textit{interp}(\Pi)$. Then $e_1\vDash\Pi$ if and only if $e_2\vDash\Pi$.
\end{lemma}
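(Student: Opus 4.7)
The plan is to prove this by a straightforward structural induction on the pruning constraint $\Pi$, using the syntax from Section~\ref{sec:pruning}: $\pi ::= d \mid \pi\lor\pi \mid \pi\land\pi$ with $d ::= h(\alpha)\neq c$. Since the statement only involves $\Pi$ itself (not arbitrary subformulas), I would first strengthen it slightly to an inductive hypothesis: for any pruning subformula $\pi$ with $\textit{interp}(\pi)\subseteq\mathcal{A}$, if $e_1\equiv_\mathcal{A} e_2$ then $e_1\vDash\pi$ iff $e_2\vDash\pi$. The original lemma follows by applying this to $\pi = \Pi$, using the assumption $\mathcal{A} = \textit{interp}(\Pi)$.

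For the base case, $\pi$ is an atomic constraint $h(\alpha)\neq c$, and by definition $\textit{interp}(\pi) = \{\alpha\}$, so $\alpha\in\mathcal{A}$. By interpretation equivalence, $e_1(\alpha) = e_2(\alpha)$, hence $e_1(\alpha)\neq c$ iff $e_2(\alpha)\neq c$, which is exactly $e_1\vDash\pi$ iff $e_2\vDash\pi$. For the inductive cases $\pi = \pi_1\lor\pi_2$ or $\pi = \pi_1\land\pi_2$, I would observe that by the inductive definition $\textit{interp}(\pi) = \textit{interp}(\pi_1)\cup\textit{interp}(\pi_2)$, so $\textit{interp}(\pi_i)\subseteq\textit{interp}(\pi)\subseteq\mathcal{A}$ for $i\in\{1,2\}$. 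The inductive hypothesis then yields $e_1\vDash\pi_i$ iff $e_2\vDash\pi_i$, and the result follows from the definition of satisfaction for $\lor$ and $\land$.

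The main obstacle, insofar as there is one, is conceptual rather than technical: it consists in recognizing that the hypothesis $\mathcal{A} = \textit{interp}(\Pi)$ is essentially tailored to make this proof go through, since $\textit{interp}$ is precisely the operator that collects all interpretations syntactically occurring in atomic subconstraints of $\Pi$. Once the induction is set up on the right, slightly strengthened statement, the proof is essentially a routine bookkeeping exercise with no further subtleties. In the multi-hole setting, the same argument applies component-wise: each atomic $h_i(\alpha)\neq c$ depends only on the $i$-th expression of the completion evaluated at $\alpha\in\mathcal{A}$, so per-hole interpretation equivalence suffices.
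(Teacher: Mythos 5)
Your proof is correct: the strengthened inductive hypothesis ($\textit{interp}(\pi)\subseteq\mathcal{A}$ for subformulas $\pi$), the base case using $e_1(\alpha)=e_2(\alpha)$ for $\alpha\in\mathcal{A}$, and the routine $\lor/\land$ cases are exactly the argument needed. The paper states Lemma~\ref{lem:interp-equiv} without any proof, so your structural induction simply supplies the routine justification the paper leaves implicit; there is no divergence to compare.
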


We also maintain the invariant that if $\angles{q, \vec c}$ is in the domain
of $V$, then $\vec c$ has the same length as $\mathcal A$. This invariant is
maintained by extending $\vec c$ in all annotated non-terminals every time
we add a new interpretation to $\mathcal A$. In particular, we extend $\vec c$
for each $\angles{q, \vec c}$ in the domain of $V$ by computing the value
of $V[\angles{q, \vec c}]$ under any new interpretations.

\subsubsection{Expressions Modulo Interpretation Equivalence}

A key contribution of our work is that our algorithm is provably complete and
terminating, even when the input is an unrealizable synthesis problem. To prove
these facts we first introduce some additional formalism.

We first define what it means for one set of expressions to subsume another set
of expressions, modulo interpretation equivalence. Let $\mathcal A$ be a set of
interpretations. Let $E_1$ and $E_2$ be two sets of expressions. We say that
$E_1$ is a subset of $E_2$ modulo interpretation equivalence, written
$E_1\subseteq_\mathcal A E_2$, if for all $e_1\in E_1$ there exists an $e_2\in
E_2$ such that $e_1\equiv_\mathcal A e_2$. 

Suppose that $\textsc{Pick}$ returns $\bot$. We denote by $\Lang V$ the set of
expressions contained in $V$ after termination and by $\Lang G$ the set of
expressions generated by the grammar $G$. To guarantee completeness, we need to
ensure that we enumerate every expression in $\Lang G$ modulo interpretation
equivalence---i.e., $\Lang G \subseteq_\mathcal A \Lang V$. Our algorithm for
populating $V$ starts with the terminals of $G$ and works its way up to larger
expressions, per standard grammar enumeration techniques. Therefore, if we
omit the interpretation reduction, our algorithm will enumerate all
expressions in $G$. The only time we might miss an expression is if we ignore
it because it is interpretation equivalent to an expression we have already
enumerated. Furthermore, if two expressions are interpretation equivalent, they
are equally useful for generating new expressions that are distinct up to
interpretation equivalence. Hence, $\Lang G\subseteq_\mathcal A\Lang V$. The
following corollary follows immediately from this observation.

\begin{corollary}
    \label{cor:mod-interp-equiv}
    If $e_1\in\Lang G\setminus\Lang V$, then there exists an $e_2\in\Lang G\cap\Lang
    V$ such that $e_1$ is interpretation equivalent to $e_2$ under $\mathcal A$.
\end{corollary}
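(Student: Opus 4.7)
The plan is to prove the corollary by structural induction on expressions in $\Lang G$, leveraging the compositional nature of interpretation evaluation together with the enumeration strategy of $\textsc{Pick}$ described just above the corollary.

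First I would fix an $e_1 \in \Lang G \setminus \Lang V$ and induct on the size (i.e., the derivation depth) of $e_1$. For the base case, if $e_1$ is a terminal derived directly as $q \to e_1$, then $\textsc{Pick}$ attempted to insert $e_1$ into $V$ under the annotated non-terminal $\langle q, \vec c\rangle$ where $c_i$ is the value of $e_1$ under $\alpha_i \in \mathcal A$. Since $e_1 \notin \Lang V$, the insertion must have been suppressed because some previously enumerated $e_2 = V[\langle q, \vec c\rangle]$ is already present. By the invariant on $V$, $e_2 \in \sem{q, \vec c}$, so $e_1 \equiv_{\mathcal A} e_2$ and $e_2 \in \Lang G \cap \Lang V$, as required.

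For the inductive step, suppose $e_1$ is obtained via a grammar rule $q \to f(q_1, \ldots, q_k)$ from sub-expressions $e_1^{(1)}, \ldots, e_1^{(k)}$, each belonging to $\Lang G$ and each of strictly smaller size. For every $i$, either $e_1^{(i)} \in \Lang V$ or, by the induction hypothesis, there is some $e_2^{(i)} \in \Lang G \cap \Lang V$ with $e_1^{(i)} \equiv_{\mathcal A} e_2^{(i)}$. In either case we obtain representatives $e_2^{(i)} \in \Lang V$ that are interpretation equivalent (under $\mathcal A$) to the corresponding $e_1^{(i)}$. Set $\tilde e_2 := f(e_2^{(1)}, \ldots, e_2^{(k)})$. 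Since interpretation evaluation is compositional, $\tilde e_2 \equiv_{\mathcal A} e_1$, and $\tilde e_2 \in \Lang G$. Now, because $\textsc{Pick}$ enumerates expressions by combining entries already in $V$ using every grammar rule, it must have attempted to add $\tilde e_2$ to $V$ under some annotated non-terminal $\langle q, \vec c\rangle$. Either $\tilde e_2$ was stored (in which case we take $e_2 := \tilde e_2$), or $\langle q, \vec c\rangle$ was already occupied by some $e_2 = V[\langle q, \vec c\rangle]$; in the latter case $e_2 \equiv_{\mathcal A} \tilde e_2 \equiv_{\mathcal A} e_1$ by the invariant on $V$ and transitivity of $\equiv_{\mathcal A}$.

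The main obstacle is the inductive step's appeal to the behavior of $\textsc{Pick}$ upon termination, specifically the claim that $\textsc{Pick}$ has actually considered every way of combining stored sub-expressions through each grammar rule before returning $\bot$. This requires invoking the breadth-first, fixpoint nature of the enumerator (mentioned in the high-level architecture description): $\textsc{Pick}$ returns $\bot$ only when no new annotated non-terminal can be produced from any rule applied to entries already in $V$. Granted this, the compositional argument above goes through cleanly, and the corollary follows.
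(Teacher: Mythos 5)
Your proof is correct and is essentially a formalization of the paper's own (informal) justification: the paper argues that the enumerator builds expressions bottom-up from terminals, skips an expression only when it is interpretation equivalent to one already stored in $V$, and that equivalent subexpressions are ``equally useful'' for building larger expressions---which is exactly your base case, your suppression argument via the invariant on $V$, and your compositionality step, together with the same reliance on \textsc{Pick} having exhausted all rule applications before returning $\bot$. The only refinement worth making is to state the induction hypothesis per non-terminal (the representative $e_2^{(i)}$ must be stored in $V$ under an annotated non-terminal of the form $\angles{q_i,\vec c}$ so that the rule $q \to f(q_1,\ldots,q_k)$ can actually combine it), which your base case already tracks implicitly.
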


\begin{lemma}
    \label{lem:reduction-complete}
    Suppose that if $e\not\vDash\Pi$, then $e\not\vDash\Phi$. Also suppose
    that $\mathcal A = \textit{interp}(\Pi)$.
    Finally, suppose that for all $e\in\Lang G\cap\Lang V$, $e\not\vDash\Pi$.
    Then for all $e\in\Lang G$, $e\not\vDash\Phi$.
\end{lemma}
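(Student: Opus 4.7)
The plan is to fix an arbitrary $e \in \Lang G$ and show, by a short case split on whether $e$ has already been enumerated, that $e \not\vDash \Phi$. The conclusion about $\Phi$ will be obtained at the very end by a single application of the first hypothesis (that $e \not\vDash \Pi$ implies $e \not\vDash \Phi$), so the real content of the argument is showing $e \not\vDash \Pi$ for every $e \in \Lang G$.

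First I would handle the easy case: if $e \in \Lang V$, then $e \in \Lang G \cap \Lang V$ and the third hypothesis gives $e \not\vDash \Pi$ directly. The interesting case is $e \in \Lang G \setminus \Lang V$. Here I invoke Corollary~\ref{cor:mod-interp-equiv} to obtain an $e' \in \Lang G \cap \Lang V$ with $e \equiv_{\mathcal A} e'$. The third hypothesis applied to $e'$ gives $e' \not\vDash \Pi$, and then Lemma~\ref{lem:interp-equiv} (which applies precisely because $\mathcal A = \textit{interp}(\Pi)$ by the second hypothesis) lets me transport non-satisfaction of $\Pi$ across the interpretation equivalence, yielding $e \not\vDash \Pi$.

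With $e \not\vDash \Pi$ established in both cases, the first hypothesis closes the argument: $e \not\vDash \Phi$. Since $e$ was arbitrary in $\Lang G$, the conclusion follows.

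I do not expect any serious obstacle: the lemma is essentially a bookkeeping statement that chains together Corollary~\ref{cor:mod-interp-equiv} (enumeration is complete modulo $\equiv_{\mathcal A}$), Lemma~\ref{lem:interp-equiv} ($\equiv_{\mathcal A}$ preserves $\Pi$-satisfaction when $\mathcal A \supseteq \textit{interp}(\Pi)$), and the given pruning-soundness hypothesis. The only point requiring care is making sure the three hypotheses are invoked in the right order and that the equality $\mathcal A = \textit{interp}(\Pi)$ is explicitly used to license Lemma~\ref{lem:interp-equiv}; otherwise the proof is just a two-line case split.
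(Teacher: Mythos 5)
Your proposal is correct and matches the paper's own proof essentially step for step: the same case split on $e \in \Lang V$ versus $e \in \Lang G \setminus \Lang V$, the same use of Corollary~\ref{cor:mod-interp-equiv} to find an enumerated representative, and the same transport of $\Pi$-non-satisfaction via Lemma~\ref{lem:interp-equiv} before applying the pruning-soundness hypothesis. No gaps.
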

\begin{proof}
    Suppose that $e_1\in\Lang G$. Either $e_1\in\Lang V$ or $e_1\not\in\Lang
    V$. If $e_1\in\Lang V$, then $e_1\not\vDash\Pi$ and hence
    $e_1\not\vDash\Phi$. Otherwise, $e_1\notin\Lang V$ and by
    Corollary~\ref{cor:mod-interp-equiv}: $e_1\equiv_\mathcal A e_2$ for some
    $e_2\in\Lang G\cap\Lang V$. By assumption, $e_2\not\vDash\Pi$ and therefore
    by Lemma~\ref{lem:interp-equiv}, $e\not\vDash\Pi$. So $e\not\vDash\Phi$. 
\qed
\end{proof}

\begin{lemma}
    \label{lem:pick-complete}
    If $\textsc{Pick}(U) = \bot$, then there is no completion of the sketch $S$
    that satisfies the property $\Phi$.
\end{lemma}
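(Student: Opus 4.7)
The plan is to reduce Lemma~\ref{lem:pick-complete} to Lemma~\ref{lem:reduction-complete}, which already packages the interpretation-reduction soundness argument. To do so I need to establish the three hypotheses of that lemma at the moment \textsc{Pick} returns $\bot$: (i) soundness of the pruning constraint, namely $e\not\vDash\Pi \Rightarrow e\not\vDash\Phi$; (ii) the invariant $\mathcal{A} = \textit{interp}(\Pi)$; and (iii) that every enumerated expression $e\in\Lang{G}\cap\Lang{V}$ violates $\Pi$.

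For (i), I would unfold $\Pi$ as the conjunction of all pruning constraints $\pi_i = \textsc{Generalize}(r_i, e_i, S)$ accumulated over the iterations of $\textsc{Synth}$. If $e\not\vDash\Pi$, then $e$ fails some conjunct $\pi_i$, and Theorem~\ref{thm:pruning-backward} immediately gives $e\not\vDash\Phi$. For (ii), I would cite the invariant stated when describing \textsc{Abstract}: each time a new $\pi$ is added to $\Pi$, every interpretation occurring in $\pi$ is appended to $\mathcal{A}$, and the entries of $V$ are extended accordingly, so the equality $\mathcal{A} = \textit{interp}(\Pi)$ is preserved throughout execution.

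The key step, and the main obstacle, is (iii): characterizing the state of $V$ precisely when $\textsc{Pick}$ returns $\bot$. Operationally, $\textsc{Pick}$ performs a breadth-first enumeration over annotated non-terminals of $G$, using the rules of $G$ to combine cached representatives in $V$; whenever a candidate representative $e$ for a fresh annotated non-terminal $\angles{q,\vec c}$ is produced, $\textsc{Pick}$ tests whether $e\vDash\Pi$ and, if so, returns it. Consequently, the only way for $\textsc{Pick}$ to terminate with $\bot$ is to have saturated the cache---no grammar rule yields a fresh annotated non-terminal---while no cached representative satisfies $\Pi$. Making this rigorous requires a small induction on the grammar-rule expansion order showing that every combination of representatives that could yield a new expression has been attempted, and that every expression kept in $V$ has been tested against the current $\Pi$ (with $V$ entries being re-tested whenever $\Pi$ grows, or, equivalently, with the representative being discarded once pruned).

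Once (i)--(iii) are in hand, Lemma~\ref{lem:reduction-complete} yields that no $e\in\Lang{G}$ satisfies $\Phi$. Since, in the single-hole setting used throughout this section, completions of $S$ are in bijection with expressions drawn from $\Lang{G}$ (and the multi-hole case is handled by the same argument applied per hole, since \textsc{Pick} maintains one reduced search space per hole), no completion of $S$ satisfies $\Phi$, which is the desired conclusion. \qed
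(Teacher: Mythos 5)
Your proposal is correct and follows essentially the same route as the paper, which proves the lemma by combining Theorem~\ref{thm:pruning-backward} (pruning constraints exclude no correct completions) with Lemma~\ref{lem:reduction-complete} (the interpretation reduction excludes no correct completions). You merely spell out in more detail the algorithmic invariants ($\mathcal{A} = \textit{interp}(\Pi)$, saturation of $V$ upon $\bot$) that the paper leaves implicit.
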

\begin{proof}
    This lemma follows from Theorem~\ref{thm:pruning-backward} and
    Lemma~\ref{lem:reduction-complete}, which say, respectively, that neither
    the pruning constraints nor the interpretation reduction exclude any
    correct completions.
\qed
\end{proof}

\section{Soundness, Completeness, and Termination}
\label{sec:correctness}

The soundness of our method is relatively straightforward.
\begin{theorem} 
\label{thm:soundness} 
If $\textsc{Synth}(S,\Phi) = e$ and $e\neq\bot$, then $e$ is a completion of the sketch
$S$ and $e$ satisfies the property $\Phi$. 
\end{theorem}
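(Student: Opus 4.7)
The plan is to proceed by direct inspection of the pseudocode of $\textsc{Synth}$ in Fig.~\ref{fig:our-algorithm}. There are exactly two return points in the loop: line~\ref{linenosolution}, which only fires when $e = \bot$, and line~\ref{linesolution}, which fires only when $\textit{correct} = \True$. Hence the assumption $e \neq \bot$ in the theorem statement forces the return to have happened at line~\ref{linesolution}, so $e$ was obtained from $\textsc{Pick}(U)$ and the flag $\textit{correct}$ returned by $\textsc{Check}(e,\Phi)$ on line~\ref{lineverifier} was \True.

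I would then handle the two required conclusions separately. For ``$e$ is a completion of $S$'', I would appeal to the behavior of $\textsc{Pick}$: by construction it only emits either $\bot$ or a tuple of expressions $(e_1,\ldots,e_n)$, one per hole, where each $e_i$ is obtained by unfolding the rules of the grammar $G_{h_i}$ cached in the search space $U$, starting from its terminals. Because the invariants maintained on $U = \angles{G,\mathcal{A},V,\Pi}$ only store values $V[\angles{q,\vec c}]$ whose expressions belong to $\Lang{G}$ (restricted to the non-terminal $q$), the output tuple is guaranteed to be a well-formed completion in the sense of Section~\ref{sec:prelims}. For ``$e$ satisfies $\Phi$'', I would invoke the soundness of the underlying verifier: $\textsc{Check}$ is the TLC model checker, which by assumption returns $\textit{correct} = \True$ only when the supplied completion $e$ satisfies $\Phi$ (for the chosen parameter instance, as discussed after Problem~\ref{problem:instance}).

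The main, and essentially only, obstacle is the implicit reliance on the verifier being sound; once this is accepted as an assumption on $\textsc{Check}$, the rest of the argument is a one-line case analysis on which return statement was reached. No properties of the pruning constraints, of $\textsc{Generalize}$, of $\textsc{Prune}$, or of $\textsc{Abstract}$ are needed here---those lemmas (Theorems~\ref{thm:pruning-forward} and~\ref{thm:pruning-backward}, Lemma~\ref{lem:pick-complete}) are instead the ones that will carry the weight in the subsequent completeness and termination theorems. So the soundness proof reduces to: ``returned value was produced by $\textsc{Pick}$ and validated by $\textsc{Check}$, hence a correct completion.'' \qed
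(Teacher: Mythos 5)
Your proposal is correct and follows essentially the same argument as the paper's own proof: the returned expression comes from the hole grammars via \textsc{Pick} (hence is a completion), and it is model checked by \textsc{Check} before being returned (hence satisfies $\Phi$, assuming the verifier is sound). The extra detail you give about the return-point case analysis and the invariant on $V$ is fine but not needed beyond what the paper states.
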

\begin{proof}
This theorem follows from (1) expressions chosen by the \textsc{Pick} subroutine come
from the hole grammars of the sketch and (2) every
completion is model checked against $\Phi$ before being returned.
\qed
\end{proof}

The completeness of our method follows from the completeness of the
\textsc{Pick} subroutine.
\begin{theorem}
\label{thm:completeness}
If $\textsc{Synth}(S,\Phi) = \bot$, then there is no completion of the sketch $S$ that
satisfies the property $\Phi$.
\end{theorem}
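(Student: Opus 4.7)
\begin{proof}
The plan is to reduce this statement directly to Lemma~\ref{lem:pick-complete} by tracing the only exit from \textsc{Synth} that returns $\bot$. Inspecting Fig.~\ref{fig:our-algorithm}, the procedure $\textsc{Synth}(S,\Phi)$ returns $\bot$ exclusively at line~\ref{linenosolution}, and this happens only when the preceding call $\textsc{Pick}(U)$ on line~\ref{linepicke} has returned $\bot$. So it suffices to argue that whenever \textsc{Pick} is invoked, the search-space object $U = \langle G, \mathcal A, V, \Pi\rangle$ satisfies the hypotheses of Lemma~\ref{lem:pick-complete}.

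First I would verify the two invariants on $U$ that are needed. The invariant $\mathcal A = \textit{interp}(\Pi)$ is established initially by $\textit{init\_search\_space}(S)$ (line~\ref{line:init-space}), where $\Pi$ is the empty conjunction and $\mathcal A$ is empty, and it is preserved inductively: every loop iteration extends $\Pi$ with the freshly generalized constraint $\pi$ via \textsc{Prune} (line~\ref{lineprune}), and then \textsc{Abstract} (line~\ref{line:abstract}) extends $\mathcal A$ with $\textit{interp}(\pi)$ while simultaneously extending each annotation tuple $\vec c$ in the domain of $V$ to the correct length, as described in Section~\ref{sec:abstraction}. The invariant ``if $e\not\vDash\Pi$ then $e\not\vDash\Phi$'' is vacuously true at initialization, and is preserved by each iteration because each newly conjoined clause $\pi$ is produced by $\textsc{Generalize}(r,e,S)$ for some counterexample $r$ of a completion that violates $\Phi$, so by Theorem~\ref{thm:pruning-backward} every completion failing $\pi$ fails $\Phi$; taking the conjunction with previously accumulated constraints preserves the implication.

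With these invariants in hand, the conclusion is immediate: if $\textsc{Synth}(S,\Phi)=\bot$, then $\textsc{Pick}(U)$ returned $\bot$ on some iteration with $U$ satisfying both invariants, so Lemma~\ref{lem:pick-complete} yields that no completion of $S$ satisfies $\Phi$. \qed
\end{proof}

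The only subtle point—really the main obstacle worth flagging—is the bookkeeping argument for the invariant $\mathcal A = \textit{interp}(\Pi)$, since it is maintained jointly by \textsc{Prune} and \textsc{Abstract} across iterations and relies on the fact that adding interpretations to $\mathcal A$ only refines the equivalence classes used by \textsc{Pick} (so no expression already represented in $V$ is lost, thanks to the extension of annotations described in Section~\ref{sec:abstraction}). Everything else is a straightforward lookup into the already-proved Lemma~\ref{lem:pick-complete} and Theorem~\ref{thm:pruning-backward}.
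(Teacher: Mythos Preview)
Your proof is correct and follows essentially the same approach as the paper: both observe that \textsc{Synth} returns $\bot$ only when \textsc{Pick} returns $\bot$, and then invoke Lemma~\ref{lem:pick-complete}. The paper's proof is terser (two sentences) and treats Lemma~\ref{lem:pick-complete} as a black box, whereas you additionally spell out the invariant-maintenance argument for $\mathcal A = \textit{interp}(\Pi)$ and ``$e\not\vDash\Pi \Rightarrow e\not\vDash\Phi$'' that the paper leaves implicit inside the statement and proof of Lemma~\ref{lem:pick-complete} itself; your extra bookkeeping is sound and arguably makes the dependency on Theorem~\ref{thm:pruning-backward} more transparent.
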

\begin{proof}
    \textsc{Synth} returns $\bot$ only if the \textsc{Pick} subroutine returns
    $\bot$. So by Lemma~\ref{lem:pick-complete}, there is no
    completion of the sketch $S$ that satisfies the property $\Phi$.
\qed
\end{proof}

Finally, under the assumptions specified in the following theorem, our
method is guaranteed to terminate.
\begin{theorem}
\label{thm:termination}
Suppose that for each set of interpretations $\mathcal A$, there are only
finitely many distinct expressions in $G$, up to interpretation equivalence.
Then the \textsc{Synth} algorithm terminates for any input sketch $S$ and
property $\Phi$.
\end{theorem}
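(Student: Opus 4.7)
The plan is a proof by contradiction. Suppose $\textsc{Synth}(S, \Phi)$ does not terminate. Then the main loop in Fig.~\ref{fig:our-algorithm} runs forever and produces an infinite sequence of candidate expressions $e_1, e_2, \ldots \in \Lang{G}$, each with an associated counterexample $r_i$ and pruning constraint $\pi_i = \textsc{Generalize}(r_i, e_i, S)$. Let $\mathcal A_i$ denote the set of interpretations maintained at the start of iteration $i$, and set $\mathcal A_\infty = \bigcup_{i \geq 1} \mathcal A_i$.

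The heart of the argument is to show that the $e_i$'s are pairwise inequivalent modulo interpretation equivalence under $\mathcal A_\infty$. Fix $i < j$. Because $e_j$ was returned by \textsc{Pick} after $\pi_i$ had been added to the accumulated constraint $\Pi$, we have $e_j \vDash \pi_i$. By Theorem~\ref{thm:pruning-forward}, $e_i \not\vDash \pi_i$. Moreover, by the invariant maintained by \textsc{Abstract}, $\textit{interp}(\pi_i) \subseteq \mathcal A_{i+1} \subseteq \mathcal A_\infty$. A mild generalization of Lemma~\ref{lem:interp-equiv} then shows that whenever $\mathcal A \supseteq \textit{interp}(\pi_i)$, any two expressions equivalent under $\equiv_\mathcal{A}$ agree on whether they satisfy $\pi_i$; applying this with $\mathcal A = \mathcal A_\infty$ yields $e_i \not\equiv_{\mathcal A_\infty} e_j$.

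With this pairwise inequivalence established, the standing assumption of the theorem says that $\Lang{G}$ has only finitely many equivalence classes modulo $\equiv_{\mathcal A_\infty}$, contradicting the existence of an infinite pairwise-inequivalent sequence $e_1, e_2, \ldots$. Hence the loop terminates.

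The main technical step I expect to need care with is the generalization of Lemma~\ref{lem:interp-equiv} from the stated case $\mathcal A = \textit{interp}(\Pi)$ to any $\mathcal A \supseteq \textit{interp}(\pi_i)$; this follows from the elementary observation that equivalence under a larger interpretation set implies equivalence under any subset, so for each $\alpha \in \textit{interp}(\pi_i)$ the evaluations agree, and satisfaction of the atomic clauses $h(\alpha) \neq c$ is preserved. A secondary point is lifting the single-hole presentation to the general multi-hole case: a candidate is a tuple of expressions, componentwise $\equiv_{\mathcal A_\infty}$-equivalence preserves satisfaction of the hole-indexed pruning constraints, and the pigeonhole argument then applies to the finite product of the per-hole equivalence classes, each finite by hypothesis.
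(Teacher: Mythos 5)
Your argument is correct, but it takes a genuinely different route from the paper's. The paper argues operationally: since state variables range over finite domains, the set of possible interpretations is bounded, so $\mathcal A$ is updated only finitely often and eventually stabilizes; under the final $\mathcal A$ the hypothesis gives finitely many equivalence classes, so \textsc{Pick} eventually exhausts them and returns a correct completion or $\bot$. You instead argue by contradiction, using the pruning machinery itself: each rejected candidate $e_i$ violates its own constraint $\pi_i$ (Theorem~\ref{thm:pruning-forward}), every later candidate $e_j$ satisfies $\pi_i$ because \textsc{Pick} only returns candidates consistent with the accumulated $\Pi$, and since $\textit{interp}(\pi_i)\subseteq\mathcal A_\infty$, a (correct and easy) strengthening of Lemma~\ref{lem:interp-equiv} makes the candidates pairwise inequivalent under $\mathcal A_\infty$, contradicting the finiteness hypothesis applied to $\mathcal A_\infty$. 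What your route buys: it works directly from the theorem's stated hypothesis and does not need the external fact about finite-domain state variables (nor stabilization of $\mathcal A$), and it does not depend on the internal reduction bookkeeping of \textsc{Pick} beyond constraint satisfaction; your handling of the multi-hole case via componentwise equivalence and a product pigeonhole is also the right lift. What the paper's route buys: by fixing a final $\mathcal A$ it more directly supports the claim that each individual call to \textsc{Pick} terminates, whereas your contradiction only rules out infinitely many loop iterations and tacitly assumes each iteration (in particular each \textsc{Pick} call and each model-checker call) terminates; you should note that this follows from the same hypothesis applied to the current, finite $\mathcal A_i$ (each counterexample is a finite run, so $\textit{interp}(\Pi)$ stays finite at every stage), which closes that small gap and is no worse than the level of detail in the paper's own proof.
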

\begin{proof}
    We synthesize protocol instances with finite domain state variables, so the
    size of $\mathcal A$ is bounded. So eventually the $\mathcal A$ will be
    updated for the last time. There are only finitely many distinct
    expressions in $G$, up to interpretation equivalence under the final
    $\mathcal A$. Therefore, the \textsc{Pick} subroutine will eventually
    return a correct completion or $\bot$.
\qed
\end{proof}

We remark that Theorem~\ref{thm:termination} does not
contradict the undecidability of Problem~\ref{problem:instance}. We guarantee termination only for
sketches with finitely many distinct expressions, up to interpretation
equivalence. All the benchmarks used in our experiments satisfy this condition.

\section{Evaluation}
\label{sec:evaluation}

We implement our method in a tool called \tool. In this section, we compare
\tool to \scythe, a state-of-the-art distributed protocol synthesis
tool~\cite{egolf2024-arxiv}. \scythe 
is publicly available~\cite{scythe-full-results}. We evaluate the performance
of the two tools on a set of benchmarks taken from~\cite{egolf2024-arxiv}. Our
experiments come from seven benchmark protocols: two phase commit (2PC),
consensus, simple decentralized lock (DL), lock server (LS), sharded key value
store (SKV), and two reconfigurable raft protocols (RR and RR-big). For RR and
RR-big we use the same incomplete protocols as in~\cite{egolf2024-arxiv}.
Otherwise, the easiest benchmarks have all pre- or all post-conditions missing
from one action of the protocol. The hardest benchmarks have all pre- and
post-conditions missing from two actions. \tool and \scythe are both written in
Python and all experiments are run on a dedicated 2.4GHz CPU.

We conduct two types of experiments: (realizable) synthesis experiments, and
unrealizability experiments. In synthesis experiments, we compare the execution
time of \tool and \scythe on realizable synthesis problems. \tool
is faster than \scythe in 160 out of 171 realizable synthesis experiments. 
Of the 11 experiments where \scythe is faster, the difference in runtimes is
more than 10 seconds in just 2 cases.

In the unrealizability experiments, we compare the ability of \tool and \scythe
to detect unrealizability. Our tool was able to detect unrealizability in 80
out of 123 instances and timed out after 1 hour in the remaining 43 instances.
\scythe recognized unrealizability in 16 of the 123 instances and timed out
after 1 hour in the remaining 107 instances.

\subsection{Realizable Synthesis Experiments}

We partition our synthesis experiments into three categories: short run
experiments (Fig.~\ref{fig:exec-time-comparison-short}) with a total execution
time of less than two minutes, longer experiments where both tools found a
solution (Fig.~\ref{fig:exec-time-comparison-no-to}), and still heavier
experiments (Fig.~\ref{fig:exec-time-comparison-to}), in which \scythe always
times out. In these figures, number labels on the horizontal axis correspond to
distinct synthesis problems and every other tick is skipped in
Fig.~\ref{fig:exec-time-comparison-short} for readability.

\begin{figure}
    \centering
    \includesvg[width=1.0\textwidth]{figs/fig4}
    \vspace{-2.25em}
    \caption{Execution time comparison for short run experiments (total
    execution time less than 120 seconds). Only even ticks shown on horizontal axis.}
    \vspace{2.25em}
    \label{fig:exec-time-comparison-short}
    \centering
    \includesvg[width=1.0\textwidth]{figs/fig5}
    \vspace{-2.25em}
    \caption{Execution time comparison for longer run experiments where neither tool
    timed out (1 hour) in all five runs.}
    \vspace{2.25em}
    \label{fig:exec-time-comparison-no-to}
    \centering
    \includesvg[width=1.0\textwidth]{figs/fig6}
    \vspace{-2.25em}
    \caption{Execution times for \tool in experiments where all five runs of
    \scythe timed out after 1 hour. \scythe always times out here, so its bars
    are not shown.} 
    \vspace{2.25em}
    \label{fig:exec-time-comparison-to}
\end{figure}

There are 171 pairs of bars shown across the three figures. Each pair
corresponds to an input to the synthesis tools (a distinct sketch-property
pair). The left bar (gray, dotted) in each pair shows the execution time of
\scythe, and the right bar (black, solid) shows the execution time of \tool. We
ran each of the experiments five times and report the minimum time
for both tools, as if we ran each tool five times in parallel and halted after
the first run halted. In all cases where \tool timed out in all five runs,
\scythe also did.

\tool performed better than \scythe in 160 out of 171 experiments. In
Fig.~\ref{fig:exec-time-comparison-no-to}, indices 11 and 23 show the 2 cases
where \scythe is faster than \tool by more than 10 seconds. These two
experiments where \scythe does much better use a variation of the consensus
protocol sketch. 
Even so, there are several experiments where \tool is faster than \scythe on
the consensus protocol by more than an order of magnitude (e.g., in
Fig.~\ref{fig:exec-time-comparison-no-to} at index 21, \tool is 67 times faster
than \scythe).

Across all experiments in Figs.~\ref{fig:exec-time-comparison-short}
and~\ref{fig:exec-time-comparison-no-to} (cases where both tools succeeded),
the total execution time for \scythe and \tool are about 32,000 and 9,000
seconds respectively. \tool is routinely faster by a factor of 100 or more. 

\scythe times out in all five runs of all experiments of
Fig.~\ref{fig:exec-time-comparison-to}, hence
Fig.~\ref{fig:exec-time-comparison-to} only includes bars for \tool. 
In the
first 6 experiments shown in Fig.~\ref{fig:exec-time-comparison-to}, \tool
finished in less than 36 seconds, while \scythe timed out after 1 hour. In
Fig.~\ref{fig:exec-time-comparison-to} at index 1, \tool finished in 18
seconds, while \scythe timed out after 1 hour; a speedup of at least 200 times. 

The experiment at index 36 of Fig.~\ref{fig:exec-time-comparison-to} is to
synthesize the entire distributed lock (DL) protocol {\em from scratch}. DL
contains 2 pre-conditions and 4 post-conditions across 2 actions, so a total of
6 expressions need to be synthesized. \scythe times out after 1 hour, but \tool
synthesizes the entire protocol in 125 seconds ($<3$ minutes). To our
knowledge, ours is the first tool to synthesize an entire \TLA protocol from
scratch.

\subsection{Unrealizability Experiments}


There are several reasons a synthesis problem might be unrealizable and
these reasons can be divided into two broad categories. In the first category,
the chosen sketch grammar is not expressive enough, but there exists a grammar
that would make the synthesis problem realizable. In the second category, the
fixed parts of the sketch surrounding the holes are such that {\em no sketch
grammar} is expressive enough. 

The second category can be further divided: (1) there aren't enough state
variables, (2) there aren't enough actions, (3) the fixed pre- and
post-conditions are simply wrong, e.g., a pre-condition always evaluates to false,
(4) a parameterized action is missing an argument, etc.

We construct unrealizable synthesis problems across these categories by
transforming our realizable synthesis problems. These transformations include:
(1) removing rules from the sketch grammar, (2) removing state variables from
the sketch, (3) removing actions from the sketch, (4) changing the fixed pre-
and post-conditions, and (5) removing parameters from actions. All of these
transformations reflect omissions that a reasonable user might make when
constructing a sketch.

In total, we ran 171 realizable synthesis problems in the previous section. We
adapted these to obtain 123 unrealizable synthesis problems. There isn't
a one-to-one correspondence between realizable and unrealizable synthesis
problems because some transformations remove actions and these actions have
holes.


%
\tool was able to detect unrealizability in 80 
out of 123 instances and timed out after 1 hour in the remaining 43 instances.
The experiments are partitioned by easy, medium, and hard benchmarks, which
were derived from the experiments in
Figs.~\ref{fig:exec-time-comparison-short},~\ref{fig:exec-time-comparison-no-to},
and~\ref{fig:exec-time-comparison-to} respectively. \tool succeeded in 46/50
easy experiments, 21/35 medium experiments, and 13/38 hard experiments. In
contrast, \scythe recognized unrealizability in 0/50, 9/35, and 7/38 easy, medium, and
hard experiments respectively. \scythe timed out after 1 hour in the remaining
107 instances. 

\tool terminated in all cases where \scythe did. In these cases where both terminated,
the execution times of the tools were within 1 second of each other and they both
terminated in less than 10 seconds. I.e., \scythe only terminated in cases where
the unrealizability was particularly easy to detect.




\section{Related Work}

\cite{ScenariosHVC2014,DBLP:journals/sigact/AlurT17,DBLP:conf/atva/EgolfT23,DBLP:journals/sttt/FinkbeinerS13,DBLP:journals/corr/FinkbeinerT15}
study synthesis of explicit-state machines.
TRANSIT~\cite{DBLP:conf/pldi/UdupaRDMMA13} requires a human in the loop to
handle counterexamples. All of
\cite{DBLP:journals/acta/MirzaieFJB20,DBLP:conf/cav/BloemBJ16,DBLP:conf/tacas/JaberWJKS23,DBLP:conf/opodis/Lazic0WB17}
consider special classes of distributed protocols.
\cite{DBLP:conf/cav/AlurRSTU15} uses an ad-hoc specification language and
relies on an external SyGuS solver to translate explicit input-output tables
representing expressions into symbolic expressions. Works like
\cite{KatzPeled2009} and \cite{ml-based-synthesis} use methods that are not
guaranteed to find a solution even if one exists.

Unrealizability results are not typically reported in the distributed protocol
synthesis literature. For instance, all results reported
in~\cite{ScenariosHVC2014,DBLP:journals/sigact/AlurT17,DBLP:conf/pldi/UdupaRDMMA13,DBLP:journals/acta/MirzaieFJB20,DBLP:conf/cav/BloemBJ16,DBLP:conf/tacas/JaberWJKS23,DBLP:conf/cav/AlurRSTU15,KatzPeled2009,ml-based-synthesis}
are for realizable instances. Some related work reports results
for unrealizable instances, but the target systems are either
explicit-state~\cite{DBLP:conf/atva/EgolfT23,DBLP:journals/sttt/FinkbeinerS13,DBLP:journals/corr/FinkbeinerT15}
or from a special class of distributed
protocols~\cite{DBLP:conf/opodis/Lazic0WB17}.

Prior to our work, \cite{egolf2024-arxiv} is the only work that synthesizes
general purpose, symbolic distributed protocols written in \TLA.
Compared to that of~\cite{egolf2024-arxiv}, our approach has several major
differences.
First, our method uses interpretation equivalence, while~\cite{egolf2024-arxiv}
uses universal equivalence.
Second, our counterexample generalization is exact except for stuttering
counterexamples whereas that of~\cite{egolf2024-arxiv} is only exact
for safety counterexamples (c.f.~Theorems~\ref{thm:pruning-forward}
\&~\ref{thm:pruning-backward}, and related discussion).
Third, our procedure is guaranteed to terminate under given conditions
(satisfied in all our benchmarks), whereas under these same conditions, the
method of~\cite{egolf2024-arxiv} may not terminate (and indeed times out in
many experiments).
Finally, our experimental results show that our approach is empirically better
than that of~\cite{egolf2024-arxiv} both in realizable and unrealizable problem
instances.



Existing SyGuS solvers use SMT formulas to express properties, and are
therefore not directly applicable to distributed protocol synthesis which
requires temporal logic properties. But our techniques for generating
expressions and checking them against pruning constraints are generally related
to term enumeration strategies used in
SyGuS~\cite{DBLP:conf/fmcad/AlurBJMRSSSTU13}. Both
EUSolver~\cite{DBLP:conf/tacas/AlurRU17} and cvc4sy~\cite{RBN19} are SyGuS
solvers that generate larger expressions from smaller expressions. EUSolver
uses divide-and-conquer techniques in combination with decision tree learning
and is quite different from our approach. To our knowledge, EUSolver does not
employ equivalence reduction at all. The ``fast term enumeration strategy'' of cvc4sy
is similar to our cache-like treatment of $V$ and also uses an equivalence reduction
technique.
\cite{DBLP:conf/cav/HuBCDR19,DBLP:conf/pldi/HuCDR20,DBLP:journals/pacmpl/KimDR23,DBLP:journals/corr/abs-2401-13244}
can recognize unrealizable SyGuS problems, but do not handle temporal
logics required for distributed protocol synthesis. To our knowledge, none of
these approaches use anything like interpretation equivalence to reduce the
search space. 
Absynthe \cite{DBLP:conf/nfm/FedchinDFMRRRSW23} uses a fixed
abstraction provided by the user to guide synthesis of programs from source
languages with complex semantics (e.g., Python), albeit not for distributed
protocols. Our abstraction (the interpretation reduction) is automatically
generated and always changing based on the accumulated counterexamples.

\ifdefined\extendedversion
\section{Conclusion}
\else
\vspace{1em}
\section{Conclusion}
\vspace{1em}
\fi

We presented a novel CEGIS-based 
synthesis method for distributed protocols. We demonstrated that our method is able to
synthesize protocols faster than the state of the art: in some cases, by several orders
of magnitude. In one case we were able to synthesize an entire \TLA protocol
from scratch in less than 3 minutes where the state of the art timed out after
an hour. 

We also provided conditions, satisfied by our benchmarks, under which our
method is guaranteed to terminate, even in cases where the synthesis problem
has no solution; the state of the art is not guaranteed to terminate under
these conditions and makes no guarantees about termination in general. In
practice,
our method recognizes five times more unrealizable synthesis instances than the
state of the art. 

Our results are enabled first and foremost by a novel search space reduction
technique called interpretation reduction; we proved that this technique does
not compromise the completeness of the synthesis algorithm. Additionally, we
use an advanced method for generalizing counterexamples.

For future work, we plan to investigate more sophisticated techniques for
traversing the reduced search space. We are also investigating how to synthesize
protocols when the actions and state variables are not known in advance.

\begin{credits}
\subsubsection{\ackname}
This material is partly supported by the National Science
Foundation under Graduate Research Fellowship Grant \#1938052,
and Award \#2319500. Any opinion,
findings, and conclusions or recommendations expressed in this material are
those of the authors(s) and do not necessarily reflect the views of the National
Science Foundation.
\end{credits}

%
%
%
\bibliographystyle{splncs04}
\bibliography{bib}

\ifdefined\extendedversion
\appendix
\section{Exact Generalization of Stuttering Violations}
\label{sec:exact-stut}

For a stuttering counterexample, we can define the exact pruning constraint
$\pi_\textit{stut\_alt}(r,e,S)$ as follows:
\begin{align*}
    &\chi_\textit{unstut} := 
    \bigvee_{A\in\Acts_\textit{eq}} 
        \left(\bigwedge_{h\in\textit{pre}(A)} h(s)\neq\False\right)
        \wedge \left(\bigvee_{h\in\textit{post}(A)} h(s)\neq s[h.
        \textit{var}]\right)
    \\
    &\chi_\textit{deadlocked} :=
    \bigwedge_{A\in\Acts} \bigvee_{h\in\textit{pre}(A)} h(s)\neq\True
    \\
    &\pi_\textit{stut\_alt} := 
    \pi_\textit{safe}
    \vee\chi_\textit{enable}(\Acts_\textit{neq}, \{s_k\})
    \vee\chi_\textit{unstut}
    \vee\chi_\textit{deadlocked}
\end{align*}
where $\Acts$ is the set of all actions, $\Acts_\textit{eq}$ is the set of
fair, can-enable, {\em $s_k$-equivalent} actions, and $\Acts_\textit{neq}$ is
the set of fair, can-enable, actions that are not $s_k$-equivalent. If $x$ is a
non-hole post-clause, let $x.\textit{var}$ denote the state variable associated
with $x$ in the sketch $S$. An action is $s_k$-equivalent if for all non-hole
post-clauses $x$ in the action, $x(s_k) = s_k[x.\textit{var}]$; i.e., the
action does not change the value of the state variable associated with $x$ at
$s_k$, for all non-hole post-clauses $x$ appearing in the action.

Intuitively, the actions in $\Acts_\textit{neq}$ will never self-loop on $s_k$
because they change the value of some state variable at $s_k$; so it is
sufficient to merely enable one of those actions in order to avoid stuttering.
Otherwise, we must ensure that at least one of the actions in $\Acts_\textit{eq}$
is not only enabled, but also that at least one of the post-holes is filled with
an expression that changes the associated state variable at $s_k$.

$\pi_\textit{stut\_alt}$ differs from $\pi_\textit{stut}$. While
$\pi_\textit{stut}$ says that we can avoid stuttering by enabling a fair action
that is disabled, $\pi_\textit{stut\_alt}$ says that we must also ensure that
the newly enabled action does not self-loop. Likewise, $\pi_\textit{stut}$ says
that we can avoid stuttering by pointing an enabled fair action to a different
state, but $\pi_\textit{stut\_alt}$ additionally requires that this action
still be enabled. In other words, $\pi_\textit{stut}$ falls short of pruning
all completions that have $r$ because it does not require that the action that
would otherwise remedy the stuttering violation be both enabled and
non-self-looping.

\begin{lemma}
    \label{lem:pruning-stut-alt}
    Let $r$ be a stuttering counterexample of completion $e_1$ of sketch $S$.
    Let $\pi = \pi_\textit{stut\_alt}(r,e_1,S)$. Then for all completions $e_2$
    of $S$, $e_2$ satisfies $\pi$ if and only if $r$ is not a stuttering
    counterexample of $e_2$.
\end{lemma}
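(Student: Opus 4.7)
The plan is to establish the biconditional by case analysis on the four disjuncts of $\pi_\textit{stut\_alt}$, each of which is tailored to negate exactly one of the four defining clauses of a stuttering counterexample: (1) $r$ is a run of $e_2$, (2) every enabled fair action at $s_k$ self-loops, (3) the self-loop at $s_k$ violates the liveness property, and (4) $s_k$ is not a deadlock. Crucially, (3) is purely a statement about the constant sequence $s_k,s_k,\ldots$ and the fixed property $\Phi$, so it depends only on the pair $(s_k,\Phi)$ and is inherited verbatim from the stuttering status of $r$ on $e_1$; it will therefore never need to be re-established.

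For the forward ($\Rightarrow$) direction I would split on which disjunct $e_2$ satisfies. The $\pi_\textit{safe}$ case reduces immediately to Lemma~\ref{lem:pruning-safe}, which guarantees that $r$ is not even a run of $e_2$, breaking~(1). The $\chi_\textit{enable}(\Acts_\textit{neq},\{s_k\})$ case gives some can-enable, fair, non-$s_k$-equivalent action $A$ with every pre-hole non-false (hence true) at $s_k$; so $A$ is enabled in $e_2$, and because $A\notin\Acts_\textit{eq}$ some non-hole post-clause of $A$ already moves off $s_k$ regardless of the hole fillings, breaking~(2). The $\chi_\textit{unstut}$ case produces an $A\in\Acts_\textit{eq}$ that is simultaneously enabled (all pre-holes true) and moved off $s_k$ by some post-hole $h$ with $h(s_k)\neq s_k[h.\textit{var}]$, again breaking~(2). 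Finally, the $\chi_\textit{deadlocked}$ case makes $s_k$ a deadlock of $e_2$, breaking~(4).

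For the backward direction I would argue by contrapositive: assume $e_2\not\vDash\pi_\textit{stut\_alt}$ and verify all four clauses of the stuttering definition on $e_2$. Clause~(1) follows from Lemma~\ref{lem:pruning-safe} applied to $e_2\not\vDash\pi_\textit{safe}$; clause~(4) follows because $e_2\not\vDash\chi_\textit{deadlocked}$ means at least one action is enabled at $s_k$; clause~(3) is inherited as noted above. The heart of the argument is clause~(2): for any fair action $A$ enabled at $s_k$ in $e_2$, enabledness forces $A$ to be can-enable, so $A\in\Acts_\textit{eq}\cup\Acts_\textit{neq}$. Violation of $\chi_\textit{enable}(\Acts_\textit{neq},\{s_k\})$ then rules out $A\in\Acts_\textit{neq}$, since otherwise some pre-hole of $A$ would be false, contradicting enabledness. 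So $A\in\Acts_\textit{eq}$, and violation of $\chi_\textit{unstut}$ together with enabledness forces every post-hole $h$ of $A$ to satisfy $h(s_k)=s_k[h.\textit{var}]$; combined with $s_k$-equivalence on the non-hole post-clauses, $A$ self-loops at $s_k$.

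The main obstacle will be this last step: verifying that violating $\chi_\textit{enable}(\Acts_\textit{neq},\{s_k\})$ and $\chi_\textit{unstut}$ \emph{simultaneously} is exactly strong enough to force every enabled fair action at $s_k$ into a self-loop, without accidentally precluding enabledness of some action at $s_k$ (which is needed to keep $s_k$ non-deadlocked for clause~(4)). I would lean on the fact that $\Acts_\textit{eq}$ and $\Acts_\textit{neq}$ partition the fair can-enable actions to ensure the two gadgets cover every possibility without over-constraining, and I would take care that the $s_k$-equivalence notion applies only to the non-hole post-clauses, so that $\chi_\textit{unstut}$ correctly isolates the post-hole contribution to breaking the self-loop.
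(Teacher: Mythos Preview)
Your proposal is correct and follows essentially the same approach as the paper: a four-way case split on the disjuncts of $\pi_\textit{stut\_alt}$ for the forward direction, and the contrapositive for the backward direction, with the key step being that any fair action enabled at $s_k$ must lie in $\Acts_\textit{eq}$ and hence self-loop. Your write-up is in fact more careful than the paper's own proof, which (apparently by copy-paste from the non-exact version) lists the disjuncts of $\pi_\textit{stut}$ rather than $\pi_\textit{stut\_alt}$ before giving case descriptions that do match $\pi_\textit{stut\_alt}$.
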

\begin{proof}

By construction, if $e_2$ satisfies $\pi$, then $e_2$ satisfies one of
$\pi_\textit{safe}(r,e_1,S)$, $\chi_\textit{move}(\Acts_\textit{enabled},
\{s_k\}, e_1)$, $\chi_\textit{enable}(\Acts_\textit{disabled}, \{s_k\})$, or
$\chi_\textit{deadlocked}$. In the first case, $r$ is not a counterexample of
$e_2$ because the path to $s_k$ is disabled. In the second case, $e_2$ enables
a fair action that is not $s_k$-equivalent, so the self-loop on $s_k$ is no
longer a fair cycle and the stuttering violation is avoided. In the third case,
$e_2$ enables a fair action that is $s_k$-equivalent, but also changes the
value of some state variable at $s_k$, so the self-loop on $s_k$ is no longer a
fair cycle and the stuttering violation is avoided. Finally, in the fourth
case, $s_k$ is a deadlock state in $e_2$, which technically avoids the
stuttering violation. Therefore, $r$ is not a stuttering counterexample of
$e_2$.

By construction, if $e_2$ does not satisfy $\pi$, then $e_2$ violates all of
$\pi_\textit{safe}(r,e_1,S)$, $\chi_\textit{move}(\Acts_\textit{enabled},
\{s_k\}, e_1)$, $\chi_\textit{enable}(\Acts_\textit{disabled}, \{s_k\})$, and
$\chi_\textit{deadlocked}$. In other words, all transitions in $r$ exist in
$e_2$, none of the non-$s_k$-equivalent fair actions are enabled at $s_k$, all
$s_k$-equivalent fair actions are either disabled or self-loop on $s_k$, and
$s_k$ is not a deadlock state in $e_2$. Therefore, $r$ is a stuttering
counterexample of $e_2$.
\qed

\end{proof}

\else
\fi

\end{document}